\def\cA{{\mathcal A}}   \def\cB{{\mathcal B}}   \def\cC{{\mathcal C}}
\def\cD{{\mathcal D}}
      \def\cU{{\mathcal U}}
\def\cV{{\mathcal V}}   \def\cW{{\mathcal W}}
\newcommand{\be}{\begin{equation}}
\newcommand{\ee}{\end{equation}}
\newcommand{\ba}{\begin{eqnarray}}
\newcommand{\ea}{\end{eqnarray}}
\newcommand{\baa}{\begin{eqnarray*}}
\newcommand{\eaa}{\end{eqnarray*}}
\newcommand{\bb}{}
\newcounter{my}
\newcommand{\he}%
   {\stepcounter{equation}\setcounter{my}%
   {\value{equation}}\setcounter{equation}0%
   }%
\newcommand{\she}%
   {\setcounter{equation}{\value{my}}%
    }%
\newtheorem{pr}{Proposition}
\newtheorem{theorem}{Theorem}[section]
\newtheorem{definition}[theorem]{Definition}
\theoremstyle{definition}
\newtheorem{remark}[theorem]{Remark}
\numberwithin{equation}{section}
\newcommand{\bra}[1]{\langle\,{#1}}
\newcommand{\ket}[1]{\mid{#1}\,\rangle}
\newcommand{\hg}[2]{\,\mbox{}_{#1}F_{ #2}\!}
\newcommand{\argu}[3]{\left(\begin{array}{c} #1\\#2\end{array} ; #3\right)}
\title{Meta Algebras and Biorthogonal Rational Functions:\\
 The Hahn Case}
\author{Satoshi Tsujimoto, Luc Vinet, Alexei Zhedanov}
\address{Graduate School of Informatics, Kyoto University,
Yoshida-Honmachi, Kyoto, Japan 606-8501}
\address{IVADO and Centre de recherches math\'ematiques, Universit\'e de Montr\'eal, P.O. Box 6128, Centre-ville Station, Montr\'eal (Qu\'ebec), H3C 3J7}
\address{Department of Mathematics, School of Information, Renmin University of China, Beijing 100872,CHINA}
\begin{document}

\begin{abstract}
    The finite families of biorthogonal rational functions and orthogonal polynomials of Hahn type are interpreted algebraically in a unified way by considering the three-generated meta Hahn algebra and its finite-dimensional representations. The functions of interest arise as overlaps between eigensolutions of generalized and ordinary eigenvalue problems on the representation space. The orthogonality relations and bispectral properties naturally follow from the framework.
\end{abstract}

\maketitle

\section{Introduction}

This paper initiates a program aimed at extending the Askey scheme to biorthogonal rational functions (BRFs) from a representation theoretic perspective. The approach will treat in a combined way both BRFs and orthogonal polynomials (OPs) with this article focusing on terminating  $_3F_2$ series i.e. functions of the Hahn type. The program hinges on the introduction of meta algebras whose name indicates that they subsume the algebras of Askey-Wilson type \cite{crampe2021askey} known to encode the bispectral properties of the Askey-Wilson polynomials and their limits and specializations. One feature of these meta algebras is that they admit as subalgebra a two-generated algebra modelling a non-commutative plane \cite{gaddis2015two}. 

By and large, the broad strategy to provide a unified algebraic interpretation of the finite polynomial families of the Askey scheme and their rational function companions is as follows. First, construct the finite-dimensional two-diagonal representation of all three generators of the meta algebra. Second, introduce various bases for these modules that are defined as solutions of Generalized Eigenvalue Problems (GEVP) and their transpose or of ordinary Eigenvalue Problems (EVP) and also of their transpose, set up on the two-diagonal representation space and solved using the known actions of the generators of the meta algebra. Third, construct overlaps between these different bases to be identified as the special functions of interest. As a consequence of the fact that eigenbases are used, the resulting special functions are by construction bispectral and this is straightforwardly spelled out. Moreover, orthogonality relations are found between overlaps involving correspondingly the eigenbases of transposed problems. 

This general program will be realized here for functions of the $_{3}F_2$ type. The intent is to develop this algebraic treatment for polynomials and rational functions of the $_4F_3$ type and for the basic ($q-$analogs) $_{3}\phi _2$ and $_{4} \phi _3$ types subsequently.

The key starting point is obviously the identification of the meta algebra in the present case, the meta Hahn one. Two previous papers have paved the way in this respect. In \cite{tsujimoto2021algebraic}, it was observed that three different operators $X, Y, Z$ act in a three-diagonal fashion on a monomial rational basis. It then proved possible to obtain the rational functions of Hahn type as solutions of the difference equation given by the GEVP involving $X$ and $Y$. The biorthogonal partner was provided by the adjoint problem. It was also seen that the GEVP defined rather in terms of $X$ and $Z$ provides a recurrence relation to show that the Hahn rational functions are biorthogonal and bispectral. It was further found that the difference operators $X, Y, Z$ formed an algebra that was called the rational Hahn algebra.

The idea of a unified algebraic treatment for both polynomials and rational functions of hypergeometric type was first put forward and realized in \cite{vinet2021unified} where the meta Hahn algebra $m\mathfrak{H}$ was introduced. Elements of the abstract representation theory of $m\mathfrak{H}$ were developed directly in eigenbases  of GEVP and EVP associated to the generators. It was shown that the overlaps between these bases are generally bound to be bispectral orthogonal polynomials or biorthogonal rational functions but it was only through the introduction of (differential or difference) models that the specific special functions were arrived at.

We now aim to provide a model-independent algebraic treatment of the Askey scheme enlarged to biorthogonal rational functions. It is hence appropriate to begin as a first step by revisiting the Hahn functions from this perspective and to record how the full characterization of these polynomial and rational functions can be synthetically derived from a remarkably simple algebra.

The organization of the paper will follow the general strategy sketched above. The framework for the joint algebraic interpretation of OPs and BRFs will be explained in more precise terms in Section 2. Working out the Hahn case will thereafter be the objective for the remainder of the article. The definition of the meta Hahn algebra $m \mathfrak{H}$ will be recalled in Section 3. How the Hahn algebra $\mathfrak{H}$ associated to the Hahn OPs embeds in $m\mathfrak{H}$ will also be spelled out. The finite-dimensional two-diagonal representation of $m \mathfrak{H}$ will be given in Section 4. The eigenvector bases will be explicitly constructed in Section 5. The representations of $m \mathfrak{H}$ on these bases will be discussed in Section 6. How the OPs and their properties emerge in this framework will be the object of Section 7 and the corresponding BRFs will be introduced and analyzed in Section 8. 
An outlook and perspectives will form Section 9.  Appendix A comprises a compendium of formulas for the actions of the generators in various bases.

\section{The General Framework} \label{gen}

A meta algebra (associated with finite families of functions) will have three generators $X, Z, V$ and will possess in particular a $(N+1)$-dimensional module $\mathcal{M}$ over $\mathbb{R}$,
equipped with a basis $\{\ket{n}, n = 0, \dots, N\}$ in which all generators act in a two-diagonal way and, a scalar product denoted by $\bra{x}\ket{y}$ for any two vectors $\ket{x},\ket{y} \in \mathcal{M}$. Corresponding to the embeddings of an Askey type algebra and of the associated rational algebra that they exhibit, meta algebras have the feature of encompassing a Leonard pair \cite{terwilliger2003introduction} formed by $V$ and $W$ (see below) and a GEVP- EVP analog involving $(X, Z)$ and $V$ \cite{vinet2021unified}.

\subsection{Bases for $\mathcal{M}$}

In addition to the basis $\{\,\ket{n}\,\}$, the following eigenbases of $\mathcal{M}$ associated to Generalized Eigenvalue Problems (GEVP) and ordinary Eigenvalue Problems (EVP) will hence be called upon.
\smallskip
\begin{itemize}
    \item GEVP bases $\{\,\ket{d_n}\,\}, \{\,\ket{d_n^*}\,\},\: n=0,\dots,N$:
    \begin{align}
        &(X-\lambda_nZ)\ket{d_n} =0,\\
        &(X^{\top}-\lambda_nZ^{\top})\ket{d_n^*} = 0.
    \end{align}
\end{itemize}
Here and in the following we shall not distinguished between the generators and their representation as operators acting on $\mathcal{M}$ since the context will always make clear what is understood. With this noted, $X^{\top}$ denotes the transpose of the operator $X$ and similarly for $Z$. The vectors involving an $*$ in their notation will always correspond to the transposed problems.
\begin{itemize}
    
    \item EVP bases $\{\,\ket{e_n}\,\},\{\,\ket{e_n^*}\,\}, \{\,\ket{f_n}\,\}, \{\,\ket{f_n^*}\,\},\: n=0,\dots,N$:
    \begin{align}
        &V\ket{e_n} = \mu _n \ket{e_n},\\
        &V^{\top}\ket{e_n^*} = \mu _n \ket{e_n^*},
    \end{align}
    and with $W=X+\rho Z$ where $\rho$ is a real parameter:
    \begin{align}
        &W\ket{f_n} = \nu _n \ket{f_n},\\
        &W^{\top}\ket{f_n^*} = \nu _n \ket{f_n^*}.
    \end{align}
\end{itemize}
We have the following orthogonality relations:
\begin{align} 
    & \bra{e_m^*}\ket{e_n}=  \kappa_n^{-1} \delta_{m,n}, \label{orth:e}\\ 
    & \bra{f_m^*}\ket{f_n}= \zeta_n^{-1}\delta_{m,n}, \label{orth:f} \\ 
    & \bra{d_m^*}\mid Z \ket{d_n} =(\bra{d_m^*}\mid Z^{\top}) \ket{d_n} = w_n^{-1}\delta_{m,n}, \qquad m,n = 0,1,\ldots,N, \label{orth:d}
\end{align}
where the constants $\kappa_n$, $\zeta_n$, $w_n$ define the norms of the basis elements. Note that over $\mathbb{R}$, $\bra{u}\ket{v} = \bra{v}\ket{u}$ for $\ket{u}, \ket{v} \in \mathcal{M}$.
While the first two relations (\ref{orth:e}), (\ref{orth:f}), are well known, the third (\ref{orth:d}) could be less familiar. It is proven as follows. We have
\begin{equation}
    \lambda_n \bra{d_m^*}\ket{Z \mid d_n} = \bra{d_m^*}\ket{X \mid d_n} = (\bra{d_m^*}\ket{X^{\top})\mid d_n} = \lambda_m(\bra{d_m^* \mid Z^{\top}}) \ket {d_n}
\end{equation}
which implies $(\lambda_n -\lambda_m)\bra{d_m^*}\ket{Z \mid d_n} = 0 $ from where (\ref{orth:d}) follows if $\lambda_n \neq \lambda_m.$
The corresponding completeness relations take the form
\begin{align}
    \sum_{n=0}^N  \kappa_n \ket{e_n} \bra{e_n^*}\mid = 1, \label{compl:e}\\
    \sum_{n=0}^N  \zeta_n \ket{f_n} \bra{f_n^*}\mid = 1, \label{compl:f}\\
    \sum_{n=0}^N  w_n Z \ket{d_n}\bra{d_n^*}\mid  = 1. \label{compl:d}
\end{align}

\subsection{Overlaps}
The following set of functions of the discrete variable $n$ and labelled by $m$ arising as expansion coefficients between bases are the central entities.
\begin{itemize}

    \item The EVP - EVP overlaps

    \begin{equation}
        S_m(n) = \bra{e_m}\ket{f^*_n},
    \end{equation}
    \begin{equation}
        \Tilde{S}_m(n) = \bra{e_m^*}\ket{f_n}.  
    \end{equation} 
 \end{itemize}   
The orthogonality relations obeyed by these functions:
\begin{align}
    &\sum_{n=0}^N \Tilde{S}_m(n) S_{m'}(n) \zeta_n = \kappa_m^{-1} \delta _{m,m'}, \label{orthS}\\
    &\sum_{m=0}^N \Tilde{S}_m(n) S_m(n') \kappa_m = \zeta_n^{-1} \delta _{n,n',} \label{dualorthS}
\end{align}
are readily seen to follow from \eqref{orth:e}, \eqref{orth:f}, \eqref{compl:e}, \eqref{compl:f}. Since the functions $S_m(n)$ and $\Tilde{S}_m(n)$ are both overlaps between eigenbases of the Leonard pair $(V, W)$ forming an algebra of the Askey-Wilson type whose representations are known to be unitarizable, we can expect $S_m(n)$ and $\Tilde{S}_m(n)$ to involve at their core the same orthogonal polynomials. (See Section 5 of \cite{vinet2021unified} for ampler explanations of this point.) This will be explicitly observed in the following.

\begin{itemize}
     
    \item The GEVP - EVP overlaps
        \begin{equation}
        U_m(n) = \bra{e_m}\ket{d^*_n},
    \end{equation}
    \begin{equation}
        \Tilde{U}_m(n) = \bra{e_m^*}\ket{Z  \mid  d_n}.  
    \end{equation} 
\end{itemize}
Stemming from \eqref{orth:d}, \eqref{orth:e}, \eqref{compl:d}, \eqref{compl:e}, the orthogonalities between these functions read:
\begin{align}
    &\sum_{n=0}^N \Tilde{U}_m(n) U_{m'}(n) w_n = \kappa_m^{-1} \delta _{m,m'}, \label{orthUUtilde}\\
    &\sum_{m=0}^N \Tilde{U}_m(n) U_m(n') \kappa_m = w_n^{-1} \delta _{n,n'} \label{dualorthUUtilde}. 
\end{align}
One defining property of a meta algebra is that it contains a GEVP analog of a Leonard pair whose features will be fully spelled out later but which is such that $X$ and $Z$ are tridiagonal in the eigenbasis of $V$. Integrating this fact with the above definitions, it follows that $U_m(n)$ satisfies a generalized eigenvalue equation defined in terms of two tridiagonal matrix and as shown in \cite{zhedanov1999biorthogonal} is thus formed of a rational function which will have its biorthogonal partner contained in $\Tilde{U}_m(n)$. Again, this will all be confirmed as we proceed.

\smallskip

From this point onward we shall focus on the Hahn case.

\section{Meta Hahn Algebra}
\begin{definition}
   The meta Hahn algebra $m\mathfrak{H}$ is generated by $X$, $Z$ and $V$ with the defining relations:
\begin{align}
    [Z,X] &= Z^2 +Z, \label{mHA1}\\
    [X,V] &= \{V,Z\} + V + 
    \xi I\label{mHA2}\\
    [V,Z] &= 2 X + 
    \eta I. \label{mHA3}
\end{align}  
\end{definition}
Consistently, $[A, B] = AB - BA$  and $\{ A , B \} = AB + BA.$  The Casimir element of $m\mathfrak{H}$ is given by:
\begin{equation}
Q=\{V, Z^2 +Z\}+2(X^2 +Z^2) +2\eta X+2(\xi +1)Z.   \label{Cas}
\end{equation}

The bispectral operators of the Hahn polynomials form a Leonard pair that realizes the Hahn algebra $\mathfrak{H}$ with generators $K_1$, $K_2$. Its defining relations are generically of the form:
\begin{align}
  [K_1, [ K_2, K_1]] &= a K_1^2 + b K_1 + c_1 K_2 + d_1 I,\label{hahnalg:1}\\
  [K_2, [K_1, K_2]] &= a \{K_1, K_2\} + b K_2 + c_2 K_1 + d_2, \label{hahnalg:2}
\end{align}
where $a$, $b$, $c_1$, $c_2$, $d_1$, $d_2$ are central parameters. (It is assumed that $a \ne 0$ in which case affine transformations of the generators bring the number of independent parameters to two.) A significant feature of the meta Hahn algebra is that the Hahn algebra embeds in it, i.e. $\mathfrak{H} \hookrightarrow m\mathfrak{H}$. Indeed by setting 
\begin{equation}
    K_1 = W = X + \rho Z, \qquad K_2 = V  \quad \text{with} \quad \rho \in \mathbb{R},
\end{equation}
and using the relations \eqref{mHA1}, \eqref{mHA2}, \eqref{mHA3} of $m\mathfrak{H}$ and the expression \eqref{Cas} for the Casimir element $Q$, one sees that $K_1$ and $K_2$ thus defined verify the Hahn relations \eqref{hahnalg:1} and \eqref{hahnalg:2} with
\begin{align}
    &a= 2, \;b= 2\rho-\xi+2\eta, \;c_1=-1, \;d_1 = -Q,\\
    &c_2=0, \;d_2=2\xi\rho.
\end{align}
\begin{remark}
    In \cite{vinet2021unified} the second and third defining relations of $m\mathfrak{H}$ are seen to involve an additional parameter and to be of the form:
    \begin{align}
        [X,V] &= \{V,Z\} + V + 
    \chi  X - \chi Z + \xi I, 
    \\
    [V,Z] &= 2 X + 
    \chi Z + 
    \eta I.    
    \end{align}
    It is readily seen that \eqref{mHA2} and \eqref{mHA3} are obtained from the latter by performing the automorphisms:
    \begin{align}
    V \to V+\frac{\chi(\chi+2)}{4}I,
    X \to X-\frac{\chi}{2} Z.
\end{align}
This corresponds to the freedom there is in splitting the Hahn algebra generator $K_1$ in two parts.
\end{remark}

\section{Two-diagonal Representation}
\begin{pr}\label{prop1}
    The two-diagonal representation of $m\mathfrak{H}$ on the finite-dimensional vector space $\mathcal{M}$ with basis $\{ \,\ket{n}, n = 0, \dots, N\}$ is given by the following actions of the generators $Z$, $X$ and $V$:
    \begin{align}
    Z\ket{n} &= -\ket{n} + a_{n} \ket{n+1},\label{ZonKetn}\\
    X\ket{n} &= (n-\alpha) \ket{n} - a_{n}(n-\beta) \ket{n+1}.\label{XonKetn}\\
    V\ket{n} &= (\beta-n)(n-\beta-1)\ket{n} - \dfrac{n(N+1-n)}{a_{n-1}}\ket{n-1},\label{VonKetn}
\end{align}
with $a_n, n=0,\dots, N$, normalization constants such that $a_N=0$, and $\alpha$, $\beta$ two parameters related to the algebra parameters $\xi, \eta$ 
and $N$ as follows:
\begin{align}
    \eta = -N+2\alpha, \quad \xi = (\beta+1)(N-\beta). \label{condpar}
\end{align}
\end{pr}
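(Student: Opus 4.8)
The plan is to verify the three defining relations \eqref{mHA1}--\eqref{mHA3} by acting with both sides on an arbitrary basis vector $\ket{n}$, exploiting that every generator is two-diagonal. Since $Z$ and $X$ only raise the index or fix it while $V$ only lowers it or fixes it, each side of each relation maps $\ket{n}$ into the span of at most the four consecutive vectors $\ket{n-1},\dots,\ket{n+2}$, so matching the coefficient of each of these reduces every relation to a handful of scalar identities in $n$. Writing $V\ket{n}=v_n\ket{n}-u_n\ket{n-1}$ with $v_n=(\beta-n)(n-\beta-1)$ and $u_n=n(N+1-n)/a_{n-1}$ keeps the bookkeeping compact.

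First I would dispatch \eqref{mHA1}. A short computation gives both $[Z,X]\ket{n}=-a_n\ket{n+1}+a_na_{n+1}\ket{n+2}$ and $(Z^2+Z)\ket{n}=-a_n\ket{n+1}+a_na_{n+1}\ket{n+2}$, so this relation holds identically, imposing no constraint on $\alpha,\beta$ or on the normalizations $a_n$. The decisive structural point, to be used throughout, is that although $u_n$ and $u_{n+1}$ individually depend on the arbitrary constants $a_n$, they enter the relations only through the $a$-independent products $u_na_{n-1}=n(N+1-n)$ and $a_nu_{n+1}=(n+1)(N-n)$ (valid for $n\le N-1$); this is what makes the $a_n$ a free gauge and lets the verification proceed purely in terms of $n,\alpha,\beta,N$.

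Next I would treat \eqref{mHA3}. Using $v_{n+1}-v_n=2(\beta-n)$ and the invariant products above, one finds $[V,Z]\ket{n}=(2n-N)\ket{n}-2a_n(n-\beta)\ket{n+1}$, the diagonal entry $2n-N=u_na_{n-1}-a_nu_{n+1}$. Comparing with $(2X+\eta I)\ket{n}=\bigl(2(n-\alpha)+\eta\bigr)\ket{n}-2a_n(n-\beta)\ket{n+1}$, the off-diagonal terms agree automatically and the diagonal terms agree precisely when $\eta=2\alpha-N$, the first relation in \eqref{condpar}.

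The main work, and the one place where a genuine cancellation must occur, is \eqref{mHA2}. Here the coefficients of $\ket{n-1}$ and $\ket{n+1}$ on both sides will be seen to match automatically, the $\ket{n+1}$ match relying on the identity $v_{n+1}+v_n=-2(n-\beta)^2$, so the whole relation collapses to its diagonal entry. On the left, the diagonal coefficient of $[X,V]\ket{n}$ is $u_na_{n-1}(n-1-\beta)-a_nu_{n+1}(n-\beta)$; on the right that of $(\{V,Z\}+V+\xi I)\ket{n}$ is $-v_n-a_nu_{n+1}-u_na_{n-1}+\xi$. Substituting the invariant products turns their difference into a polynomial in $n$, and the crux is that its cubic, quadratic, and linear parts all cancel, leaving the constant $(\beta+1)(N-\beta)$; matching then forces $\xi=(\beta+1)(N-\beta)$, the second relation in \eqref{condpar}. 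This collapse of the $n$-dependence to a constant is the heart of the proposition and the step I expect to require the most care. Finally I would confirm that the actions are well defined on the $(N+1)$-dimensional module: the factor $n$ in $u_n$ kills the spurious $\ket{-1}$ term at $n=0$, while $a_N=0$ removes the $\ket{N+1}$ contributions from $Z\ket{N}$ and $X\ket{N}$, so the top edge is consistent without ever invoking the ill-defined $u_{N+1}$.
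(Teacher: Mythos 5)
Your verification is correct: I checked the coefficient matchings, and in particular the identity $v_{n+1}+v_n=-2(n-\beta)^2$ does make the $\ket{n+1}$ entries of \eqref{mHA2} agree automatically, while substituting the invariant products $u_na_{n-1}=n(N+1-n)$ and $a_nu_{n+1}=(n+1)(N-n)$ into the diagonal entry of \eqref{mHA2} indeed collapses all $n$-dependence, leaving exactly $\xi=(\beta+1)(N-\beta)$; likewise $u_na_{n-1}-a_nu_{n+1}=2n-N$ gives $\eta=2\alpha-N$ from \eqref{mHA3}, and your edge analysis at $n=0$ and $n=N$ is sound. However, you run the argument in the opposite direction from the paper: you take \eqref{ZonKetn}--\eqref{VonKetn} as given and verify the defining relations, whereas the paper \emph{derives} the representation, positing a general two-diagonal ansatz $Z\ket{n}=c_n\ket{n}+a_n\ket{n+1}$, $X\ket{n}=d_n\ket{n}+b_n\ket{n+1}$, using \eqref{mHA1} to force $b_n=a_n(\beta-n)$ and $c_n(c_n+1)=0$, choosing the gauge $c_n=-1$ to get $d_n=n-\alpha$, and then extracting $V\ket{n}$ together with \eqref{condpar} from \eqref{mHA2}--\eqref{mHA3}. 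The paper's ansatz-based route buys the extra insight that the two-diagonal form is essentially canonical (the $a_n$ are a free gauge, $c_n=-1$ a normalization, $\alpha,\beta$ parametrizations), which the definite article in the statement implicitly suggests and which your verification alone does not establish; conversely, your route delivers a fully airtight existence proof, making explicit the boundary consistency at $n=0$ (the factor $n$ in $u_n$) and $n=N$ (via $a_N=0$, never invoking $u_{N+1}$) and isolating the single nontrivial cancellation that fixes $\xi$ --- details the paper's sketch leaves implicit. Both arguments suffice for the formal content of the proposition.
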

\begin{proof}
    
Assume that
\begin{align}
    Z\ket{n} &= c_n \ket{n} + a_{n} \ket{n+1},\\
    X\ket{n} &= d_n \ket{n} + b_{n} \ket{n+1}.
\end{align}
From (\ref{mHA1}), one obtains
\begin{align*}
    a_n a_{n+1} + a_{n} b_{n+1} - a_{n+1} b_n  =0,
\end{align*}
which leads to
\begin{align}
    b_n = a_n(-n + \beta).
\end{align}
One then finds that $c_n (c_n+1)=0$.
Taking $c_n = -1$, it follows that
\begin{align}
    d_n = n - \alpha.
\end{align}
The first relation (\ref{mHA1}) thus gives the two-diagonal actions \eqref{ZonKetn} and \eqref{XonKetn} of $Z$ and $X$.
Furthermore, the second and third relations (\ref{mHA2})-(\ref{mHA3}) yield \eqref{VonKetn} for $V \ket{n}$ with the conditions \eqref{condpar} required to ensure a finite-dimensional truncation.
\end{proof}
Although, this readily follows from Proposition \ref{prop1}, we shall record for convenience the action of the transposed generators:

\begin{align}
    Z^{\top}\ket{n} &= -\ket{n} + a_{n-1} \ket{n-1},\label{ZTonKetn}\\
    X^{\top}\ket{n} &= (n-\alpha) \ket{n} - a_{n-1}(n-1-\beta) \ket{n-1},\label{XTonKetn}\\
    V^{\top}\ket{n} &= (\beta-n)(n-\beta-1)\ket{n} - \dfrac{(n +1)(N-n)}{a_{n}}\ket{n+1}.\label{VTonKetn}
\end{align}

\section{(Generalized) Eigenbases} \label{Secion 5}

The various eigenbases of $\mathcal{M}$ defined in Section \ref{gen} can now be explicitly constructed given the above two-diagonal representation of $m\mathfrak{H}$. The results are gathered below.
\begin{pr}
The solutions to the GEVPs and EVPs of interest are given as follows in terms of expansions over the basis $\{\, \ket{n}, n = 0, \dots, N\}${\rm :}
  \begin{itemize}
  
\item {$X \ket{d_n} = \lambda_n Z \ket{d_n}$}
\begin{align}
    \lambda_n &=\alpha-n,\\
    \ket{d_n} &= \sum_{\ell=0}^{N} \dfrac{a_n a_{n+1}\cdots a_{N-1}}{a_{\ell} a_{\ell+1}\cdots a_{N-1}}\frac{(n-N)_{N-\ell}(n-N-\alpha+\beta+1)_{N-n}}{(n-N)_{N-n}(n-N-\alpha+\beta+1)_{N-\ell}} \ket{\ell}.
    \label{def:d_n}
\end{align}

\item {$X^{\top} \ket{d_n^*} = \lambda_n Z^{\top} \ket{d_n^*}$}
\begin{align}
    \lambda_n &=\alpha-n,\\
    \ket{d_n^*} &= \sum_{\ell=0}^{N} \dfrac{a_{0}a_1 \cdots a_{n-1}}{a_{0}a_1 \cdots a_{\ell-1}}    \frac{(-n)_{\ell}(-n+\alpha-\beta)_n}{(-n)_{n}(-n+\alpha-\beta)_{\ell}} \ket{\ell}.
    \label{def:d_n^*}
\end{align}

\item {$V \ket{e_n} = \mu_n \ket{e_n}$}
\begin{align}
    \mu_n &=(\beta-n)(n-\beta-1),\\
    \ket{e_n} &= \sum_{\ell=0}^{N} \dfrac{a_0 a_1 \cdots a_{\ell-1}}{a_0 a_1 \cdots a_{n-1}}
    \frac{n! (-N)_{n}(-n,n-2\beta-1)_{\ell}}{\ell! (-N)_{\ell} (-n,n-2\beta-1)_{n}} \ket{\ell}.
    \label{def:e_n}
\end{align}

\item {$V^{\top} \ket{e_n^*} = \mu_n e_n^*$}
\begin{align}
    \mu_n &=(\beta-n)(n-\beta-1),\\
   \ket{e_n^*} &= \sum_{\ell=0}^{N} \dfrac{a_{\ell}a_{\ell+1}\cdots a_{N-1}}{a_{n}a_{n+1}\cdots a_{N-1}}  \frac{(N-n)!(-N)_{N-n}(n-N,-N-n+2\beta+1)_{N-\ell}}{(N-\ell)! (-N)_{N-\ell}(n-N,-N-n+2\beta+1)_{N-n}} \ket{\ell}.
    \label{def:e_n^*}   
\end{align}

\item {$ W\ket{f_n} = (X+\mu Z) \ket{f_n} = \rho_n \ket{f_n}$} 
\begin{align}
    \rho_n &= n-\alpha-\mu, \\
    \ket{f_n} &= \sum_{\ell=0}^{N} \frac{a_{n}a_{n+1} \cdots a_{N-1}}{a_{\ell}a_{\ell+1} \cdots a_{N-1}}\frac{(n-N)_{N-\ell}(-N+\beta+\mu+1)_{N-n}}{(n-N)_{N-n}(-N+\beta+\mu+1)_{N-\ell}} \ket{\ell}. \label{def:f_n}
\end{align}

\item {$W^{\top}\ket{f_n^*} = (X^{\top}+\mu Z^{\top}) \ket{f_n^*} = \rho_n \ket{f_n^*}$} 
\begin{align}
    \rho_n &= n-\alpha-\mu, \\
    \ket{f_n^*} &= \sum_{\ell=0}^{N} \frac{a_{0}a_{1} \cdots a_{n-1}}{a_{0}a_{1} \cdots a_{\ell-1}}    \frac{(-n)_{\ell} (-\beta-\mu)_{n}}{(-n)_{n}(-\beta-\mu)_{\ell}} \ket{\ell}. \label{def:f_n^*}
\end{align}

\end{itemize}

Note importantly that the normalizations in \eqref{orth:e}, \eqref{orth:f}, \eqref{orth:d}, have been chosen so that $\kappa_n = 1$, $\zeta_n = 1$ and $w_n = -1$.
\end{pr}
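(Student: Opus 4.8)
The plan is to determine each (generalized) eigenvector by substituting its expansion $\ket{g}=\sum_{\ell=0}^{N}c_\ell\ket{\ell}$ into the relevant equation and exploiting the fact that, by Proposition \ref{prop1}, every generator acts in a two-diagonal fashion. For the GEVP $(X-\lambda Z)\ket{d_n}=0$ the combination $X-\lambda Z$ is, by \eqref{ZonKetn}--\eqref{XonKetn}, lower bidiagonal: $(X-\lambda Z)\ket{\ell}=(\ell-\alpha+\lambda)\ket{\ell}-a_\ell(\ell-\beta+\lambda)\ket{\ell+1}$. The admissible eigenvalues are forced by the vanishing of the determinant, i.e. of the product of diagonal entries $\prod_{\ell=0}^{N}(\ell-\alpha+\lambda)$, giving $\lambda_n=\alpha-n$. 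Collecting the coefficient of $\ket{m}$ then yields the first-order recurrence $c_m(m-n)=c_{m-1}a_{m-1}(m-1+\alpha-\beta-n)$. The same computation applied to $V-\mu$ through \eqref{VonKetn} and to $W-\rho=X+\mu Z-\rho$ through \eqref{ZonKetn}--\eqref{XonKetn} produces the eigenvalues $\mu_n=(\beta-n)(n-\beta-1)$ and $\rho_n=n-\alpha-\mu$ together with their own two-term recurrences, while the transposed problems are handled identically using the transposed actions \eqref{ZTonKetn}--\eqref{VTonKetn}.

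Each recurrence being first-order, it is solved by a telescoping product. For $\ket{d_n}$ the factor $(m-n)$ on the left shows that the recurrence propagates upward from $\ell=n$ and that $c_\ell=0$ for $\ell<n$; iterating gives $c_\ell=c_n\,a_n\cdots a_{\ell-1}\,(\alpha-\beta)_{\ell-n}/(\ell-n)!$ for $\ell\ge n$. The remaining work is to recast this product in the manifestly $N$-symmetric Pochhammer form of \eqref{def:d_n} and to fix the normalization $c_n=1$; this is a routine application of the reflection and splitting identities $(x)_k=(-1)^k(1-x-k)_k$ and $(x)_{j+k}=(x)_j(x+j)_k$. The vectors $\ket{e_n},\ket{f_n}$ and all four transposed vectors follow the same template, the only differences being the direction in which the recurrence terminates (upper- versus lower-triangular support in $\ell$) and the particular shifted arguments entering the Pochhammer symbols.

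The normalization statement is then immediate from this support structure. Each listed vector is normalized so that its coefficient on $\ket{n}$ equals $1$, and the recurrences show that $\ket{e_n},\ket{f_n},\ket{d_n}$ are supported on $\ell\ge n$ whereas their starred partners $\ket{e_n^*},\ket{f_n^*},\ket{d_n^*}$ are supported on $\ell\le n$ (here $\bra{m}\ket{n}=\delta_{m,n}$, which is precisely what makes \eqref{ZTonKetn}--\eqref{VTonKetn} the matrix transposes of \eqref{ZonKetn}--\eqref{VonKetn}). Consequently the biorthogonal overlaps collapse to the single surviving term at $\ell=n$: $\bra{e_n^*}\ket{e_n}=(e_n^*)_n(e_n)_n=1$ and likewise $\bra{f_n^*}\ket{f_n}=1$, so $\kappa_n=\zeta_n=1$ by \eqref{orth:e}--\eqref{orth:f}. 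For the GEVP overlap one inserts $Z\ket{d_n}$; since $\ket{d_n}$ is supported on $\ell\ge n$ and $Z$ raises the index, only the diagonal part of $Z$ contributes at $\ell=n$, giving $\bra{d_n^*}\mid Z\ket{d_n}=(d_n^*)_n\cdot(-(d_n)_n)=-1$, whence $w_n=-1$ by \eqref{orth:d}.

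I expect the only genuine labor to be the bookkeeping of the Pochhammer manipulations that cast each telescoped product into the symmetric closed form quoted in the statement; conceptually there is no obstacle, since the two-diagonal action reduces every problem to a first-order recurrence whose solution is unique up to the normalization already fixed.
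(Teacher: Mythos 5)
Your proposal is correct and takes essentially the same route as the paper: the paper likewise projects each (G)EVP onto the standard basis, uses the two-diagonal actions to produce a first-order two-term recurrence (cf.\ \eqref{ebase:cond}, \eqref{dbase:cond}), solves it by telescoping under the normalization that the coefficient of $\ket{n}$ equals $1$, and reads off $\kappa_n=\zeta_n=1$, $w_n=-1$ from the resulting triangular supports, your determinant argument for the eigenvalues being just an explicit version of the paper's ``read off the diagonal part.'' One harmless slip in your last paragraph: the supports of the $e$ pair are inverted --- since the off-diagonal of $V$ \emph{lowers} the index \eqref{VonKetn}, $\ket{e_n}$ is supported on $\ell\le n$ and $\ket{e_n^*}$ on $\ell\ge n$ --- but the supports remain complementary, so the collapse of $\bra{e_n^*}\ket{e_n}$ to the single $\ell=n$ term, and hence $\kappa_n=1$, is unaffected.
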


\begin{proof}
Here is a sketch of the proof.   
First, the eigenvalues $\lambda_n$, $\mu _n$ and $\rho _n$ are straightforwardly identified from the diagonal part of the actions of the generators.
Second, consider to begin, the eigenvectors $\{\,\ket{e_n}\,\}_{n=0}^{N}$ of $V$ in the $N+1$-dimensional space $\mathcal{M}$.
From $V\ket{e_n} = \mu_n \ket{e_n}$, we have
\begin{align}
  \bra{\ell}\ket{(V-\mu_n)\mid e_n} =0, \quad (\ell = 0, 1, \ldots, N),
\end{align}
which upon using the action of $V^{\top}$ on $\bra{\ell}\mid$ amounts to
\begin{align}
  (\mu_\ell-\mu_n)\bra{\ell}\ket{e_n} - \dfrac{(\ell+1)(N-\ell)}{a_{\ell}}\bra{\ell+1}\ket{e_n} =0, \quad (\ell = 0, 1, \ldots, N),
  \label{ebase:cond}
\end{align}
where $\mu_{k} =(\beta-k)(k-\beta-1)$.
It is easy to solve (\ref{ebase:cond}) under the normalization condition $\bra{n}\ket{e_n}=1$ to find the explicit expression for $\ket{e_n}$ given in (\ref{def:e_n}).

Coming to the generalized eigenvectors $\{\,\ket{d_n}\,\}_{n=0}^{N}$ for $X\ket{d_n} = \lambda_n Z \ket{d_n}$, one has
\begin{align}
  \bra{N-\ell}\ket{(X-\lambda_n Z)\mid d_n } =0, \quad (\ell = 0, 1, \ldots, N),
\end{align}
which leads to   
\begin{align}
  -(\lambda_{N-\ell}-\lambda_{n})\bra{N-\ell}\ket{d_n} + \left((-N+\ell+1+\beta)-\lambda_n)\right)a_{N-\ell-1}\bra{N-\ell-1}\ket{d_n} =0,
  \label{dbase:cond}
\end{align}
where $\lambda_k = \alpha-k$,
for $\ell = 0,1,\ldots,N$.
Solving (\ref{dbase:cond}) under the normalization $\bra{n}\ket{d_n}=1$, yields
\begin{align}
    \bra{N-\ell}\ket{d_n} &= \prod_{j=0}^{\ell-1} \dfrac{(n+j-N)}{a_{N-j-1} (n+j -N-\alpha+\beta +1)}\bra{N}\ket{d_n} \\
    &=
    \dfrac{a_{n}\cdots a_{N-1}}{a_{N-\ell} \cdots a_{N-1}}\frac{(n-N)_{\ell}(n-N-\alpha+\beta+1)_{N-n}}{(n-N)_{N-n}(n-N-\alpha+\beta+1)_{\ell}}.
\end{align}
and hence the formula for $\ket{d_n} $ provided in (\ref{def:d_n}).

Similarly the eigenvectors  $\ket{e_n^*} $ (\ref{def:e_n^*}) and $\ket{d_n^* }$ (\ref{def:d_n^*})  of the problems involving the transposed operators
$V^{\top}$, $X^{\top}$ and $Z^{\top}$ are respectively obtained by solving
\begin{align}
    \bra{N-\ell}\ket{(V^{\top}-\mu_n) \mid e_n^*} =0, \quad
    \bra{\ell}\ket{(X^{\top}-\lambda_nZ^{\top}) \mid d_n^*} =0,
\end{align}
which entails
\begin{align}
    &
 (\mu_{N-\ell}-\mu_n)\bra{N-\ell}\ket{e_n^*} - \dfrac{(\ell+1)(N-\ell)}{a_{N-\ell-1}}\bra{N-\ell-1}\ket{e_n^*} =0,\\
   &  -(\lambda_{\ell}-\lambda_{n})\bra{\ell}\ket{d_n^*} + \left((-\ell+\beta)-\lambda_n)\right)a_{\ell}\bra{\ell+1}\ket{d_n^*} =0,
\end{align}
for $\ell =0,1,\ldots,N$.
One proceeds in the same fashion to obtain the expressions \eqref{def:f_n} and \eqref{def:f_n^*} for $\ket{f_n}$ and $\ket{f_n^*}$.

\end{proof}

Let us record here the following formula for the action of $Z$ on $\ket{d_n}$ :

\begin{align}
       Z \ket{d_n} &= \sum_{\ell=0}^{N} \dfrac{a_n a_{n+1}\cdots a_{N-1}}{a_{\ell} a_{\ell+1}\cdots a_{N-1}}\frac{(n-N)_{N-\ell}(n-N-\alpha+\beta+1)_{N-n}}{(n-N)_{N-n}(n-N-\alpha+\beta+1)_{N-\ell}} Z \ket{\ell} \nonumber\\
       &=\sum_{\ell=0}^{N} \dfrac{a_n a_{n+1}\cdots a_{N-1}}{a_{\ell} a_{\ell+1}\cdots a_{N-1}}\frac{(n-N)_{N-\ell}(n-N-\alpha+\beta+1)_{N-n}}{(n-N)_{N-n}(n-N-\alpha+\beta+1)_{N-\ell}} (-\ket{\ell} +a_{\ell} \ket{\ell+1})\nonumber\\
       &= - \sum_{\ell=0}^{N} \dfrac{a_n a_{n+1}\cdots a_{N-1}}{a_{\ell} a_{\ell+1}\cdots a_{N-1}}\frac{(n-N)_{N-\ell}(n-N-\alpha+\beta+2)_{N-n}}{(n-N)_{N-n}(n-N-\alpha+\beta+2)_{N-\ell}} \ket{\ell}\nonumber\\
       & = -\ket{d_n} \Bigr|_{\alpha\to \alpha-1} \label{Zdn}
\end{align}
where $a_{N} = 0$.

We further notice, that the eigenvectors are of the form:
\begin{align*}
    \ket{e_n} &=\, \ket{n}+\sum_{j=0}^{n-1} C_{n,j}^{(e)} \ket{j}, \quad 
    \ket{e_n^*}=\, \ket{n}+\sum_{j=n+1}^{N} C_{n,j}^{(e*)} \ket{j},\\
    \ket{f_n} &=\, \ket{n}+\sum_{j=n+1}^{N} C_{n,j}^{(f)} \ket{j}, \quad 
    \ket{f_n^*} =\, \ket{n}+\sum_{j=0}^{n-1} C_{n,j}^{(f*)} \ket{j},\\
    Z \ket{d_n} &= -\ket{n}+\sum_{j=n+1}^{N} C_{n,j}^{(Zd)} \ket{j}, \quad 
    \ket{d_n^*}=\, \ket{n}+\sum_{j=0}^{n-1} C_{n,j}^{(d*)} \ket{j}.
\end{align*}
in keeping with the normalization choices and that the orthogonality relations \eqref{orth:e}, \eqref{orth:f}, \eqref{orth:d}, with the given conventions, are thus explicitly seen to result.

\section{Representations of $m\mathfrak{H}$ on various bases}

Some bits of the representation theory of $m\mathfrak{H}$ on EVP and GEVP bases were developed directly \cite{vinet2021unified} without recourse obviously to the two-diagonal representation. It is here possible to proceed with much more ease and clarity since these basis eigenvectors are now explicitly known elements of a fully characterized representation space. In addition to providing in this Section (and in Appendix A) various matrix elements of the generators (in these various eigenbases that will be used in the treatment of the arising special functions), we shall also summarize some salient features of these representations. We shall generically denote by $O_{m,n}^{(b)}$ the $m,n$ entry of the matrix representing the operator $O$ in the basis $b \in \{d, d^*, e, e^*, f, f^*\}$.

\subsection{Representations in the $e$ and $e^*$ bases}
The $e$ and $e^*$ bases, respectively formed from the eigenvectors of $V$ and $V^{\top}$, are pivotal in that they enter in the construction of both pairs of functions $S, \Tilde{S}$ and $U, \Tilde{U}$. A key observation is that $X$ and $Z$ (and hence $W=X + \rho Z$) are tridiagonal in the basis $e$.

From the expression \eqref{def:e_n} for $\ket{e_n}$ and the actions of $Z, X$ on the standard basis, we obtain in addition to $V \ket{e_n} = \mu_n \ket{e_n}$,
\begin{align}
    Z \ket{e_n} &= Z^{(e)}_{n+1,n} \ket{e_{n+1}} +Z^{(e)}_{n,n} \ket{e_{n}} + Z^{(e)}_{n-1,n} \ket{e_{n-1}},
\label{action:Zone}\\
    X \ket{e_n} &=X^{(e)}_{n+1,n} \ket{e_{n+1}} + X^{(e)}_{n,n} \ket{e_{n}} + X^{(e)}_{n-1,n} \ket{e_{n-1}},
\end{align}
where
\begin{align}
&Z^{(e)}_{n+1,n}=a_n, \label{action:Zone:coe1}\\
&Z^{(e)}_{n,n}= -1-\dfrac{(n+1)(n-N)}{2(\beta-n)}+\dfrac{n(n-N-1)}{2(\beta-n+1)}, \label{action:Zone:coe2}  \\
&Z^{(e)}_{n-1,n}=\dfrac{n(n-N-1)(\beta+(-n+2)/2)(\beta+(-n+1-N)/2)}{4a_{n-1}(\beta -n+1/2)(\beta -n+1)^2(\beta -n+3/2)},\label{action:Zone:coe3}\\
&X^{(e)}_{n+1,n}=a_n (\beta-n),
\label{action:Xone:coe1}
\\ &X^{(e)}_{n,n}=\frac{N}{2}-\alpha,
\label{action:Xone:coe2}
\\
    &X^{(e)}_{n-1,n}= -(\beta-n+1)Z^{(e)}_{n-1,n}.
\label{action:Xone:coe3}
\end{align}
The tridiagonal actions of the transposed operators $X^{\top}$ and $Z^{\top}$ on $ \ket{e_n^*}$ is readily obtained from the formulas above.

\subsection{Representations in the $f$ and $f^*$ bases}
The $f$ and $f^*$ bases are made out of the eigenvectors of the linear pencil $W = X + \rho Z$ and of its transpose $W^{\top} = X^{\top} + \rho Z^{\top}$. The significant observation here is that $V$ is tridiagonal in the $f$ basis. Indeed a straigtforward computation gives
\begin{align}
    V \ket{f_n} &= V^{(f)}_{n+1,n} \ket{f_{n+1}} + V^{(f)}_{n,n} \ket{f_{n}} + V^{(f)}_{n-1,n} \ket{f_{n-1}}, \label{action:Vonf} 
\end{align}
with
\begin{align}
  &V^{(f)}_{n+1,n}=a_n(n - \beta - \mu)(n- N +\beta - \mu + 1), \label{vf+1}\\
&V^{(f)}_{n,n}=
(N - 2 n) \mu - 2 n\left( N - n\right) + \beta (N - \beta - 1), \label{vf}\\
&V^{(f)}_{n-1,n}=\dfrac{n(n - N - 1)}{a_{n-1}}.  \label{vf-1}
\end{align}

Naturally, $V^{\top}$ acts tridiagonally in the $f^*$ basis, see \eqref{action:VTonfs}, \eqref{coefvf*}. As alluded before, in conjunction with the observations made in the last subsection, this confirms that $V$ and $W$ form a Leonard pair with $V$ tridiagonal in the $f$ basis where $W$ is diagonal and conversely with $W$ tridiagonal in the $e$ basis that diagonalizes $V$. In this case, the set $\{\,\ket{n}\, \}$ plays the role of a split basis for the pair \cite{terwilliger2003introduction} and this yields in a simple way the finite-dimendional representations of the Hahn algebra $\mathfrak{H}$.

\subsection{Representations in the $d$ and $d^*$ bases}

We recall that these bases are obtained from the solutions of the GEVP $(X-\lambda_n Z)\ket{d_n} = 0$ and its adjoint. There are two observations to stress regarding the actions of the generators in these bases. 
The first one is that $V$ is represented by a upper Hessenberg matrix in the basis $d$, that is a matrix with zero entries above the first superdiagonal. See \eqref{Vond}. The transposed situation occurs for $V^{\top}$ in the basis $d^*$ \eqref{VTonds}.
The second observation is that $VZ$ and $V^{\top} Z^{\top}$ respectfully act tridiagonally on the basis $d$ and $d^*$. The reader is referred to Appendix A for the corresponding formulas and additional results on the actions of the operators in the various bases.
We thus see how the notion of Leonard pair generalizes to situations involving one GEVP and one EVP. The operators $X$ and $Z$ that define the GEVP are tridiagonal in the basis $e$ where V is diagonal while it is $VZ$ (or $VX$) that is tridiagonal in the basis $d$ generated by the GEVP solutions. Therein lies as we will see the bispectrality of the BRFs $U$ and $\Tilde{U}$.

\section{Hahn and Dual Hahn Polynomials}
How the Hahn (and dual Hahn) polynomials arise in the functions $S_m(n) = \bra{e_m}\ket{f_n^*}$ and $\Tilde{S}_m(n) = \bra{e_m^*}\ket{f_n}$ is shown next. The characterization of these polynomials \cite{koekoek2010hypergeometric} is of course classical. It is recorded in the present context to underscore that their orthogonality and bispectral properties are obtained quite easily from the meta Hahn algebra representations and more importantly that OPs and BRFs can be treated within this framework in a completely parallel fashion.

Recall \cite{koekoek2010hypergeometric} that the Hahn polynomials are defined as 
\begin{equation}
    Q_m (x; \hat{\alpha}, \hat{\beta}, N) = \hg{3}{2}\argu{-m, m+\hat{\alpha}+\hat{\beta}+1,-x}{-N,\hat{\alpha}+1}{1}, \quad m=0, \dots , N \label{HP}
\end{equation}
and the dual Hahn polynomials as
\begin{equation}
    R_m (\lambda(x), \hat{\alpha}, \hat{\beta}, N) = \hg{3}{2}\argu{-m, -x, x+\hat{\alpha}+\hat{\beta}+1}{-N, \hat{\alpha}+1}{1}, \quad \text{with} \quad \lambda(x) = x(x+\hat{\alpha}+\hat{\beta}+1). \label{dHP}
\end{equation}
Note that when $x$ is the discrete variable $n=0,\dots,N$, the dual Hahn polynomials are obtained from the Hahn ones by exchanging $m$ and $n$ in the latter:
$R_m (\lambda(n), \hat{\alpha}, \hat{\beta}, N)=  Q_n (m; \hat{\alpha}, \hat{\beta}, N)$. The notation
\begin{equation}
    (a_1, a_2, \dots, a_k)_n = (a_1)_n (a_2)_n \dots (a_k)_n.
\end{equation}
shall be used.

\subsection{Identification}
Here is the precise connection between the EVP-EVP overlaps and the Hahn polynomials.
\begin{pr} \label{propid}
The functions $S_m(n) = \bra{e_m} \ket{f_n^*}$ and $\Tilde{S}_m(n) = \bra{e_m^*}\ket{f_n}$ are both expressible in terms of Hahn polynomials as follows
\begin{align}
 S_m(n)&=
    \dfrac{a_{0}a_{1} \cdots a_{n-1}}{a_0 a_1 \cdots a_{m-1}}
   \dfrac{N! (-1)^n}{n! (N-m)!}\dfrac{(\hat{\alpha}+1)_{n}}{(m+\hat{\alpha} + \hat{\beta}+1)_{m}}
 Q_m (n; \hat{\alpha}, \hat{\beta}, N) \label{Shahn}\\
   \tilde S_m(n)&=
    \frac{a_{n}a_{n+1} \cdots a_{N-1}}{a_{m}a_{m+1}\cdots a_{N-1}} \dfrac{N!(-1)^n (\hat{\alpha}+1)_m }{m! (N-n)! (\hat{\beta}+1)_m} \dfrac{(\hat{\beta}+1)_{N-n}}{(2m+\hat{\alpha}+\hat{\beta}+2)_{N-m}} Q_m (n; \hat{\alpha}, \hat{\beta}, N), \label{sthahn}
\end{align}
with

\begin{equation}
    \hat{\alpha} = -1-\beta-\mu  \qquad \hat{\beta} = \mu-\beta-1.\label{newpar}
\end{equation}
\end{pr}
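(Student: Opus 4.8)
The plan is to compute both overlaps directly from the explicit eigenvector expansions, using the resolution of the identity in the standard basis. Since the transposes in \eqref{ZTonKetn}--\eqref{VTonKetn} are the matrix transposes with respect to $\{\ket{n}\}$, the bilinear form makes that basis orthonormal, $\langle k\,|\,\ell\rangle=\delta_{k\ell}$, so that
\[
S_m(n)=\sum_{\ell=0}^{N}\langle \ell\,|\,e_m\rangle\,\langle \ell\,|\,f_n^*\rangle,
\qquad
\tilde S_m(n)=\sum_{\ell=0}^{N}\langle \ell\,|\,e_m^*\rangle\,\langle \ell\,|\,f_n\rangle .
\]
First I would substitute the coefficients \eqref{def:e_n}, \eqref{def:f_n^*} into the first sum and \eqref{def:e_n^*}, \eqref{def:f_n} into the second. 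The crucial observation that makes everything collapse is that the normalization products $a_0\cdots a_{\ell-1}$ (respectively $a_\ell\cdots a_{N-1}$) carried by the two factors have their $\ell$-dependence cancel, leaving only the $\ell$-independent ratios of $a$'s that appear out front in \eqref{Shahn} and \eqref{sthahn}.

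For $S_m(n)$ this is essentially the whole computation. After the cancellation the residual sum is
\[
\sum_{\ell=0}^{N}\frac{(-m)_\ell\,(m-2\beta-1)_\ell\,(-n)_\ell}{(-N)_\ell\,(-\beta-\mu)_\ell\,\ell!}
=\hg{3}{2}\argu{-m,\,m-2\beta-1,\,-n}{-N,\,-\beta-\mu}{1},
\]
a terminating series which, under the identifications \eqref{newpar} (so that $-\beta-\mu=\hat\alpha+1$ and $m-2\beta-1=m+\hat\alpha+\hat\beta+1$), is precisely $Q_m(n;\hat\alpha,\hat\beta,N)$ by \eqref{HP}. The leftover prefactor is then reduced to the form \eqref{Shahn} using the elementary identities $(-m)_m=(-1)^m m!$, $(-n)_n=(-1)^n n!$ and $(-N)_m=(-1)^m N!/(N-m)!$.

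The case of $\tilde S_m(n)$ begins identically, but the residual series is of a different shape: after the $a$'s cancel and one sets $k=N-\ell$, one is left with
\[
\hg{3}{2}\argu{m-N,\,-N-m+2\beta+1,\,n-N}{-N,\,-N+\beta+\mu+1}{1},
\]
which terminates at $k=N-m$. This is where the real work lies, and I expect it to be the main obstacle: this ${}_3F_2$ is not manifestly $Q_m(n)$, and it must be brought to the form \eqref{HP} by a transformation of terminating ${}_3F_2(1)$ series. Concretely, I would reverse the order of summation via $(a)_{N-m-j}=(-1)^j(a)_{N-m}/(1-a-N+m)_j$ and then apply a Sheppard--Thomae transformation (equivalently, invoke the classical Hahn symmetry relating $Q_m(n;\hat\alpha,\hat\beta,N)$ to $Q_m(N-n;\hat\beta,\hat\alpha,N)$); the genuinely tedious part is tracking the Pochhammer factors spawned by the reversal and the transformation and collecting them, together with the $a$-ratio, into the exact prefactor of \eqref{sthahn}. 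As a consistency check that both overlaps carry the \emph{same} Hahn polynomial, one can appeal to the fact, established in Section~6, that $V$ and $W$ form a Leonard pair, so $S_m(n)$ and $\tilde S_m(n)$ are overlaps of eigenbases of one and the same pair and must therefore share the core factor $Q_m(n;\hat\alpha,\hat\beta,N)$; a boundary evaluation (for instance at $m=0$, where $\ket{e_0}=\ket{0}$ forces the $S$ sum to a single term) provides an independent test of the normalizations.
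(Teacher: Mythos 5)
Your proposal is correct and follows essentially the paper's own route: for $S_m(n)$ the computation is identical (cancellation of the $\ell$-dependent $a$-products, the residual sum being manifestly $Q_m(n;\hat{\alpha},\hat{\beta},N)$ under \eqref{newpar}, with the same elementary Pochhammer identities fixing the prefactor), and for $\tilde S_m(n)$ you arrive at exactly the paper's intermediate terminating ${}_3F_2$ of \eqref{ST1} and correctly locate all the real work in transforming it to Hahn form. The only inessential divergence is the choice of transformation chain---the paper applies the Thomae-type identity \eqref{TF1} followed by the Whipple special case \eqref{TF2}, whereas you propose series reversal plus a Sheppard--Thomae transformation (equivalently the Hahn reflection symmetry relating $Q_m(n;\hat{\alpha},\hat{\beta},N)$ and $Q_m(N-n;\hat{\beta},\hat{\alpha},N)$)---but these are interchangeable elements of the same transformation group for terminating ${}_3F_2(1)$ series and lead to \eqref{sthahn} after the same kind of Pochhammer bookkeeping that the paper carries out.
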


\begin{proof}
 Given the expressions \eqref{def:e_n} and \eqref{def:f_n^*} for $\ket{e_n}$  and $\ket{f_n^*}$ respectively, using $\bra{\ell}\ket{k} = \delta _{\ell,k}$ and simple transformations, one straightforwardly finds 
 \begin{equation}
     \bra{e_m}\ket{f_n^*} =   \dfrac{a_{0}a_{1} \cdots a_{n-1}}{a_0 a_1 \cdots a_{m-1}}
     \dfrac{N! (-1)^n}{n! (N-m)!}\dfrac{(-\beta-\mu)_{n}}{(-2\beta+m-1)_{m}}
    \sum_{\ell=0}^{N} \frac{(-n,-m,m-2\beta-1)_{\ell} }{\ell!(-N,-\beta-\mu)_{\ell}},
 \end{equation}
 which in view of definition \eqref{HP} is readily seen to yield \eqref{Shahn} under the identification \eqref{newpar}.
 The derivation of formula \eqref{sthahn} requires a little more effort. From \eqref{def:e_n^*} and \eqref{def:f_n}, it is immediate to find:
 \begin{align}
    \tilde S_m(n)=\bra{e_m^*}\ket{f_n} &=
    \frac{a_{n}a_{n+1} \cdots a_{N-1}}{a_{m}a_{m+1}\cdots a_{N-1}} \dfrac{(N-m)!(-N)_{N-m}(-N+\beta+\mu+1)_{N-n}}{(n-N)_{N-n}(m-N,-N-m+2\beta+1)_{N-m}} \nonumber\\
   &\times 
    \sum_{k=0}^{N}
     \frac{(n-N,m-N,-N-m+2\beta+1)_{k}}{k!(-N,-N+\beta+\mu+1)_{k}}   \label{ST1}
 \end{align}
 The hypergeometric sum in \eqref{ST1} needs to be transformed to identify the Hahn polynomial. The following two formulas will be used:
 \begin{equation}
     \hg{3}{2}\argu{a, b, c}{d, e}{1} = \frac{\Gamma (e) \Gamma (d+e-a-b-c)}{\Gamma (e-c) \Gamma (d+e-b-c) } \hg{3}{2}\argu{a, d-b, d-c}{d, d+e-b-c}{1}, \label{TF1}
  \end{equation}
  (corollary 3.3.5 in \cite{andrews1999special}) and 
  \begin{equation}
      \hg{3}{2}\argu{-n, b, c}{d,e}{1} = \frac{(e-c)_n}{(e)_n} \hg{3}{2}\argu{-n, d-b, c}{d, c-e-n+1}{1}, \label{TF2}
  \end{equation}
  (an exercise in \cite{bailey1935generalized} and a special case of Whipple's formula \cite{gasper2011basic}). From \eqref{TF1}, one has
  \begin{align}
    \hg{3}{2}\argu{-N-m+2\beta+1,n-N,m-N}{-N,-N+\beta+\mu+1}{1} = &\frac{\Gamma(\beta+\mu+1-N)\Gamma(\mu-\beta-n+N)}{\Gamma(\beta+\mu+1-n-m)\Gamma(\mu-\beta+m)} \nonumber\\
    &\times \hg{3}{2}\argu{-N-m+2\beta+1,-n,-m}{-N,\beta+\mu+1-n-m}{1} 
\end{align}
and \eqref{TF2} yields
\begin{align}
    \hg{3}{2}\argu{-N-m+2\beta+1,-n,-m}{-N,\beta+\mu+1-n-m}{1} 
    & =\dfrac{(-\beta-\mu)_n}{(m-\beta-\mu)_n} \hg{3}{2}\argu{m-2\beta-1,-n,-m}{-N,-\beta-\mu}{1} ,
\end{align}
where we have used $(a+1-n)_n = (-1)^n (-a)_n$. Combining these results and recalling that $\Gamma (a+n)/\Gamma (a) = (a)_n$, one finds
  \begin{align}
    \tilde S_m(n)&=
    \frac{a_{n}a_{n+1} \cdots a_{N-1}}{a_{m}a_{m+1}\cdots a_{N-1}} \dfrac{(N-m)!(-N)_{N-m}}{(n-N)_{N-n}(m-N)_{N-m}}\dfrac{(-N+\beta+\mu+1)_{N-n}}{(-N-m+2\beta+1)_{N-m}} \nonumber\\
   &\times 
   \dfrac{(\mu-\beta)_{N-n}}{(-N+\beta+\mu+1)_{N-n-m} (\mu-\beta)_m} 
   \dfrac{(-\beta -\mu)_n}{(m-\beta -\mu)_n} \hg{3}{2}\argu{m-2\beta-1, -n -m}{-N, -\beta - \mu}{1}.
\end{align}   
Numerous simplifications can now be performed. First one notes that
\begin{equation}
    \frac{(N-m)! (-N)_{N-m}}{(n-N)_{N-n} (m-N)_{N-m}} = (-1)^{N-n} \frac{N!}{(N-n)! m!}.
\end{equation}
From the simple identity $(-m-a+1)_m = (-1)^m(a)_m$, it follows that
\begin{equation}
    (-N-m+2\beta)_{N-m} = (-1)^{N-m} (2m -2\beta+1)_{N-m},
\end{equation}
and one can further show that
\begin{equation}
    \frac{(-N+\beta+\mu+1)_{N-n}}{(-N+\beta+\mu+1)_{N-n-m}} = (-1)^m (n-\beta-\mu)_m,
\end{equation}
and that
\begin{equation}
    \frac{(-\beta - \mu)_n}{(m-\beta-\mu)_n} = \frac{(-\beta-\mu)_m}{(n-\beta -\mu)_m}.
\end{equation}
Putting all this together, one arrives at

  \begin{align}
    \tilde S_m(n)&=
    \frac{a_{n}a_{n+1} \cdots a_{N-1}}{a_{m}a_{m+1}\cdots a_{N-1}} \dfrac{N!(-1)^n}{(N-n)! m!}\nonumber\\
   &\times 
   \dfrac{(\mu-\beta)_{N-n}(-\beta -\mu)_m}{(2m-2\beta)_{N-m} (\mu-\beta)_m} 
   \hg{3}{2}\argu{m-2\beta-1, -n -m}{-N, -\beta - \mu}{1},
\end{align}   
which is checked to coincide with \eqref{sthahn} upon substituting the parameters \eqref{newpar} and using the definition \eqref{HP} of the Hahn polynomials.
\end{proof}

\subsection{Orthogonality relations}

The fact that the functions $S_m(n)$ and $\Tilde{S}_m(n)$ are orthogonal by construction with respect to the variables $m$ and $n$ (recall \eqref{orthS}, \eqref{dualorthS}) can now be exploited to readily recover the orthogonality relations of the Hahn and dual Hahn polynomials and their normalizations. Remember that $\zeta_n = \kappa_n = 1, \forall n$ given the normalization of the eigenvectors that have been imposed in Section \ref{Secion 5}.

Substituting the expressions for $S_m(n)$ and $\Tilde{S}_m(n)$ of Proposition \ref{propid} in
\begin{equation}
    \sum_{n=0}^N \Tilde{S}_m(n)S_{m'}(n) = \delta_{m.{m'}},
\end{equation}
one finds
\begin{align}
    &\sum_{n=0}^N \frac{(\hat{\alpha}+1)_n (\hat{\beta}+1)_{N-n}}{n! (N-n)!} Q_m (n; \hat{\alpha}, \hat{\beta}, N)Q_{m'} (n; \hat{\alpha}, \hat{\beta}, N)\nonumber\\
    & \qquad = \frac{(-1)^m (m+\hat{\alpha} +\hat{\beta}+1)_{N+1} (\hat{\beta}+1)_m m!}{(2m+ \hat{\alpha}+ \hat{\beta}+1) (\hat{\alpha}+1)_m (-N)_n N!} \delta_{m,{m'}},
\end{align}
using
\begin{equation}
    (2m+\hat{\alpha}+\hat{\beta}+2)_{N-m} (m+\hat{\alpha}+\hat{\beta}+1)_m = \frac{(m+\hat{\alpha}+\hat{\beta}+1)_{N+1}}{(2m+\hat{\alpha}+\hat{\beta}+1)} \label{simplif}
\end{equation}
and $(-N)_n=(-1)^n N!/(N-n)!.$ This is seen to coincide exactly with the formula given in \cite{koekoek2010hypergeometric} if one recalls that
\begin{equation}
    \binom{\alpha+k}{k}= (-1)^k\frac{(-\alpha-k)_k}{k!}= \frac{(\alpha+1)_k}{k!}.
\end{equation}

The other orthogonality relation
\begin{equation}
     \sum_{m=0}^N \Tilde{S}_m(n)S_{m}({n'}) = \delta_{n.{n'}},
\end{equation}
gives (using again \eqref{simplif}) the dual relation
\begin{align}
    &\sum_{m=0}^N \frac{(2m+\hat{\alpha}+\hat{\beta}+1)(\hat{\alpha}+1)_m (-N)_m  N!}{(-1)^m(m+\hat{\alpha}+\hat{\beta}+1)_{N+1} (\hat{\beta}+1)_m m!} Q_m (n; \hat{\alpha}, \hat{\beta}, N)Q_m ({n'}; \hat{\alpha}, \hat{\beta}, N)\\
    &= \frac{(N-n)! n!}{(\hat{\alpha}+1)_n (\hat{\beta}+1)_{N-n}} \delta_{n,{n'}}
\end{align}
which is of course the orthogonality relation of the dual Hahn polynomials \cite{koekoek2010hypergeometric}.

\subsection{Bispectral properties}

The recurrence relation and difference equation of the Hahn polynomials follow in the present picture from the fact that they appear in overlaps between eigenvectors of two different EVPs.

\subsubsection{Recurrence relation}
Recall the EVPs defined in terms of the linear pencil $W=X+\mu Z$ and its transpose: $W\ket{f_n}=\rho_n\ket{f_n}$ and $W^{\top}\ket{f_n^*}=\rho _n\ket{f_n^*}$ with $\rho_n = n-\alpha-\mu$. 
From $(\bra{ f_n^*}\mid W^{\top}) \ket{e_m} = \bra{f_n^*}\ket{W\mid e_m}$, 
one has
\begin{align}
\rho_n \bra{f_n^*}\ket{e_m}&=
W_{m+1,m}^{(e)}
\bra{f_n^*}\ket{e_{m+1}} +
W_{m,m}^{(e)}
\bra{f_n^*}\ket{e_{m}} +
W_{m-1,m}^{(e)}
\bra{f_n^*}\ket{e_{m-1}}.    
\end{align}
The matrix elements $W_{i,j}^{(e)}$ of $W$ in the basis $e$ are readily obtained from those of $X$ and $Z$ given by formulas \eqref{action:Zone:coe1}-\eqref{action:Xone:coe3}.

This amounts to the  following recurrence relation for $S_m(n)$:
\begin{align}
    \rho_n S_m(n) =
W_{m+1,m}^{(e)}S_{m+1}(n)+
W_{m,m}^{(e)}S_m(n)+
W_{m-1,m}^{(e)}S_{m-1}(n)  . \label{recurS}
\end{align}
from where the one for the Hahn polynomials follow. Indeed, substituting the expression \eqref{Shahn} for $S_m(n)$ one finds,
\begin{align}
     (n-\alpha-\mu) Q_m(n) =
&W_{m+1,m}^{(e)} \frac{(N-m)(m+\hat{\alpha}+ \hat{\beta}+1)}{a_m (2m + \hat{\alpha}+ \hat{\beta}+1)(2m+\hat{\alpha}+ \hat{\beta}+2)}Q_{m+1}(n) \nonumber \\
&+W_{m,m}^{(e)}Q_m(n)\nonumber \\
&+W_{m-1,m}^{(e)}\frac{a_{m-1}(2m+\hat{\alpha}+ \hat{\beta}-1)(2m + \hat{\alpha}+ \hat{\beta})}{(N-m+1)(m+\hat{\alpha}+ \hat{\beta})}Q_{m-1}(n),   \label{recurQ} 
\end{align}
where we have suppressed the parameters of the Hahn polynomials $Q_m (x; \hat{\alpha}, \hat{\beta}, N)$. Replacing the original parameters by $\hat{\alpha}$ and $\hat{\beta}$ given in \eqref{newpar}, one finds for the matrix elements:
\begin{align}
  &W_{m+1,m}^{(e)}=-a_m(m+\hat{\alpha}+1),  \\
  &W_{m,m}^{(e)}+\alpha +\mu = \frac{N}{2} + \frac{(\hat{\beta}-\hat{\alpha})}{2(2n+\hat{\alpha}+ \hat{\beta}+2)} 
  \bigg[2n-N - \frac{2n(n-N-1)}{2n+\hat{\alpha}+ \hat{\beta}}\bigg],\\
  &\qquad\qquad =\frac{N}{2} + \frac{\hat{\beta}-\hat{\alpha}}{4} + \frac{(\hat{\alpha}^2-\hat{\beta}^2)(2N+\hat{\alpha}+ \hat{\beta}+2)}{4(2n+\hat{\alpha}+ \hat{\beta})(2n+\hat{\alpha}+ \hat{\beta}+2)},\nonumber\\
  &W_{m-1,m}^{(e)}=\frac{(m+\hat{\beta})m(m-N-1)(m+\hat{\alpha}+\hat{\beta})(m+N+\hat{\alpha}+\hat{\beta}+1)}{a_{m-1}(2m+\hat{\alpha}+\hat{\beta}+1)(2m+\hat{\alpha}+\hat{\beta})^2(2m+\hat{\alpha}+\hat{\beta}-1)}.
\end{align}
It is then seen that the recurrence relation \eqref{recurQ} can be presented in the standard form \cite{koekoek2010hypergeometric}
\begin{equation}
    nQ_m(n)=-A_m Q_{m+1}(n)+(A_m + C_m)Q_m(n)-C_mQ_{m-1}(n),
\end{equation}
where
\begin{align}
    &A_m=\frac{(m+\hat{\alpha}+\hat{\beta}+1)(m+\hat{\alpha}+1)(N-m)}{(2m+\hat{\alpha}+\hat{\beta}+1)(2m+\hat{\alpha}+\hat{\beta}+2)},\\
    &C_m=\frac{m(m+\hat{\alpha}+\hat{\beta}+N+1)(m+\hat{\beta})}{(2m+\hat{\alpha}+\hat{\beta})(2m+\hat{\alpha}+\hat{\beta}+1)}.
\end{align}

\subsubsection{Difference equation}
The difference equation of the Hahn polynomials is obtained in a similar fashion by using the EVP for $V$: $V\ket{e_n} = \mu _n \ket{e_n}$ with $\mu_n=(\beta-n)(n-\beta-1)$. From $\bra{f_n^*}\ket{V\mid e_m}=(\bra{ f_n^*\mid V^{\top}})\ket{e_m}$ we have
\begin{align}
    \mu_m S_m(n)
&= V^{\top(f*)}_{n+1,n} S_m(n+1) +V^{\top(f*)}_{n,n} S_m(n)+V^{\top(f*)}_{n-1,n} S_m(n-1)
\nonumber\\
&=V^{(f)}_{n,n+1} S_m(n+1) +V^{(f)}_{n,n} S_m(n)+V^{(f)}_{n,n-1} S_m(n-1).
\end{align}
Substituting the expression \eqref{Shahn} for $S_m(n)$, yields
\begin{align}
    \mu_m Q_m(n) =& 
- \frac{a_n(n+\hat{\alpha} + 1)}{(n+1)}V^{(f)}_{n,n+1} Q_m(n+1) \nonumber \\ 
&+V^{(f)}_{n,n} Q_m(n)- \frac{n}{a_{n-1} (n+\hat{\alpha})}V^{(f)}_{n,n-1} Q_m(n-1).\label{difQ}
\end{align}
Observe that 
\begin{equation}
\mu_n = -m(m-2\beta-1)-\beta(\beta +1).
\end{equation}
Converting to the parameters $\hat{\alpha}$ and $\hat{\beta}$ and recalling \eqref{vf-1}, \eqref{vf+1}, \eqref{vf} for the matrix elements, one finds
\begin{align}
 &V^{(f)}_{n,n+1} = \frac{1}{a_n} (n+1)(n-N), \\
 &V^{(f)}_{n,n-1} = a_{n-1} (n+\hat{\alpha})(n-N-\hat{\beta}-1),\\
 &V^{(f)}_{n,n}+ \beta(\beta +1)= 2n^2+n(\hat{\alpha}-\hat{\beta}-2N) - N(\hat{\alpha}+1).
\end{align}
Incorporating all that in \eqref{difQ} proves  indeed that the Hahn polynomials obey the familiar difference equation \cite{koekoek2010hypergeometric}:
\begin{equation}
    m(m+\hat{\alpha}+\hat{\beta}+1) Q_m(n) = B(n) Q_m(n+1) - \big[B(n) + D(n)\big] Q_m(n) + D(n) Q_m(n-1),
\end{equation}
where
\begin{align}
    & B(n) = (n+\hat{\alpha}+1)(n-N)\\
    & D(n) = n(n-\hat{\beta}-N-1).
\end{align}
Obviously, as for the recurrence relation, the same final results are obtained by initiating the computations with the functions $\Tilde{S}_m(n)$.

\section{Hahn Rational Functions}
This section will provide the promised algebraic interpretation of the following functions introduced in \cite{vinet2021unified}.
\begin{align}
&\cU_m(x;a,b,N)
=\frac{(-1)^m(-N)_m}{(b+1)_m}
\hg{3}{2}\argu{-x,-m,b+m-N}{-N,a-x}{1}, \label{cU}
\\
&\cV_m(x;a,b,N)
=\cU_m(N-x;b+2-a,b,N).\label{cV}
\end{align}
Their rational character is manifest and they will be seen to be biorthogonal partners. As they are defined in terms of $\hg{3}{2}$ series they are naturally said to be of Hahn type. Their connection with the meta Hahn algebra, jointly with the eponymous polynomials, further justifies this name. The normalization has been chosen so that
\begin{equation}
\lim_{x\to\infty} \cU_m(x;a,b,N) = 1.
\end{equation}
\subsection{Representation theoretic interpretation}
As already mentioned, BRFs arise from considering the GEVP-EVP overlaps $U_m(n)=\bra{e_m}\ket{d_n^*}$ and $\tilde U_m(n) = \bra{e_m^*}\ket{Z \mid d_n}$.
\begin{pr}
The functions $U_m(n)$ and $\tilde U_m(n)$ are respectively given as follows in terms of the rational Hahn functions $\cU_m(x;a,b,N)$ and $\cV_m(x;a,b,N)$:
\begin{align}
  U_m(n)&=  
\frac{a_0\dots a_{n-1}}{a_0\dots a_{m-1}}
\frac{(1-a)_n (1+b)_m}{n! (m+b-N)_m}\,\cU_m(n;a,b,N),  \label{Ucu}\\
\tilde U_m(n) &= 
-\frac{a_n\dots a_{N-1}}{a_m\dots a_{N-1}}\frac{(m+1)_{N-m}(a-b-1)_{N-n}}{(N-n)!(-N)_m (-b)_{N-2m}}\,\cV_m(n;a,b,N). \label{UTcv}
\end{align}
where
\begin{equation}
    a=\alpha-\beta, \qquad b-N=-2\beta-1. \label{parab}
\end{equation}
\end{pr}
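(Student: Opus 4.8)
The plan is to proceed exactly as in the proof of Proposition~\ref{propid}, exploiting that both bases entering each overlap are explicitly known in the standard basis $\{\ket{\ell}\}$. Since $\bra{\ell}\ket{k}=\delta_{\ell,k}$, each overlap is the sum over $\ell$ of the product of the two expansion coefficients, and in every case the $a_j$-dependent prefactors telescope: the factor $a_0\cdots a_{\ell-1}$ coming from one vector cancels the $a_0\cdots a_{\ell-1}$ (or $a_\ell\cdots a_{N-1}$) from the other, leaving an $\ell$-independent ratio of the $a_j$ together with a single terminating ${}_3F_2$ at unit argument. The parameter dictionary \eqref{parab}, namely $a=\al-\bt$ and $b-N=-2\bt-1$, is then imposed and the resulting series matched against the definitions \eqref{cU}, \eqref{cV} of $\cU_m$ and $\cV_m$.

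For $U_m(n)=\bra{e_m}\ket{d_n^*}$ I would insert \eqref{def:e_n} (with $n\to m$) and \eqref{def:d_n^*}. The telescoping leaves the prefactor $\tfrac{a_0\cdots a_{n-1}}{a_0\cdots a_{m-1}}$ and the sum
\[
\sum_{\ell=0}^{N}\frac{(-m,\,m-2\bt-1,\,-n)_\ell}{\ell!\,(-N,\,-n+\al-\bt)_\ell}
=\hg{3}{2}\argu{-m,\,-n,\,m-2\bt-1}{-N,\,\al-\bt-n}{1}.
\]
Under \eqref{parab} one has $m-2\bt-1=b+m-N$ and $\al-\bt-n=a-n$, so this series is precisely the ${}_3F_2$ occurring in $\cU_m(n;a,b,N)$. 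It then remains to reduce the leftover Pochhammer factors to those in \eqref{Ucu}, which is immediate from $(-m)_m=(-1)^m m!$ and the reflection identity $(a-n)_n=(-1)^n(1-a)_n$: these convert $\tfrac{m!}{(-m)_m(m-2\bt-1)_m}$ and $\tfrac{(-n+\al-\bt)_n}{(-n)_n}$ into $\tfrac{(1-a)_n}{n!\,(m+b-N)_m}$ once the $(-N)_m$ and $(b+1)_m$ factors are accounted for. No hypergeometric transformation is needed here, so $U_m(n)$ is the easy half.

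For $\tilde U_m(n)=\bra{e_m^*}\ket{Z\mid d_n}$ I would use \eqref{def:e_n^*} (with $n\to m$) together with the action \eqref{Zdn}, which records that $Z\ket{d_n}=-\ket{d_n}\big|_{\al\to\al-1}$; this is the source both of the overall minus sign in \eqref{UTcv} and of the shifted lower parameter. After telescoping of the $a_j$ and the reindexing $k=N-\ell$, the overlap becomes $\tfrac{a_n\cdots a_{N-1}}{a_m\cdots a_{N-1}}$ times a prefactor and the terminating series
\[
\hg{3}{2}\argu{m-N,\,n-N,\,-N-m+2\bt+1}{-N,\,n-N-\al+\bt+2}{1}.
\]
This series does not coincide with the one appearing in $\cV_m(n;a,b,N)=\cU_m(N-n;b+2-a,b,N)$, so the crux of the argument is to transform it into that form using the same ${}_3F_2(1)$ transformation formulas \eqref{TF1} and \eqref{TF2} (or, equivalently, a reversal-of-summation identity) that were employed for $\tilde S_m(n)$. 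I expect this to be the main obstacle: one must choose the transformation(s) so as to send the upper pair $\{m-N,\,-N-m+2\bt+1\}$ to $\{-m,\,m-2\bt-1\}$ and the lower entry $n-N-\al+\bt+2$ to $n-\al-\bt+1$ while keeping $-N$ and $n-N$ fixed, and then carry the accompanying $\Gamma$-quotients through. Reconciling the resulting Pochhammer prefactor with the one displayed in \eqref{UTcv}, via elementary identities such as $(-m-a+1)_m=(-1)^m(a)_m$ exactly as in the treatment of $\tilde S_m$, completes the identification and in particular fixes the factor $\tfrac{(m+1)_{N-m}(a-b-1)_{N-n}}{(N-n)!\,(-N)_m\,(-b)_{N-2m}}$ together with its sign.
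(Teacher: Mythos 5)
Your proposal matches the paper's proof essentially step for step: the computation of $U_m(n)$ by direct expansion in the standard basis and the Pochhammer simplifications via $(-m)_m=(-1)^m m!$ and $(a-n)_n=(-1)^n(1-a)_n$ is exactly the paper's, and for $\tilde U_m(n)$ you correctly start from \eqref{def:e_n^*} and \eqref{Zdn} and identify precisely the ${}_3F_2$ transformation problem the paper then solves. The one step you leave open is handled in the paper by two successive applications of \eqref{TF2} alone (\eqref{TF1} is not needed here): first with terminating parameter $n-N$, sending $m-N$ to $-m$ and producing the factor $(m+n-\alpha-\beta+1)_{N-n}=(-1)^{N-n}(-m-N+\alpha+\beta)_{N-n}$, then with terminating parameter $-m$, sending $-N-m+2\beta+1$ to $m-2\beta-1$ and the lower entry to $n-\alpha-\beta+1$ --- exactly the parameter mapping you prescribe.
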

\begin{proof}
 The identification of $\cU _m(n)$ in $U_m(n)$ is readily achieved by taking the scalar product of the vectors $\ket{e_m}$ and $\ket{d_n^*}$ respectively given by \eqref{def:e_n} and \eqref{def:d_n^*}. This yields
\begin{equation}
U_m(n)=  \dfrac{a_{0}a_1 \cdots a_{n-1}}{a_0 a_1 \cdots a_{m-1}}\dfrac{m! (-N)_{m} (-n+\alpha-\beta)_n}{(-n)_{n}(-m,m-2\beta-1)_{m}}   \sum_{\ell=0}^{N}
    \frac{(-n,-m,m-2\beta-1)_{\ell}}{\ell! (-N,-n+\alpha-\beta)_{\ell} }, 
\end{equation}
from where one gets formula \eqref{Ucu} using \eqref{cU} and \eqref{parab}.

Obtaining $\Tilde{U}_m(n)$ requires more algebraic transformations. From formulas \eqref{def:e_n^*} and \eqref{Zdn} for $\ket{e_n^*}$ and $Z\ket{d_n}$ one finds:
\begin{align}
 \tilde U_m(n) &= -
    \dfrac{a_{n}a_{n+1}\cdots a_{N-1}}{a_{m}a_{m+1}\cdots a_{N-1}}  
     \frac{(N-m)!(-N)_{N-m}(n-N-\alpha+\beta+2)_{N-n}}{(n-N)_{N-n}(m-N,-N-m+2\beta+1)_{N-m}}
 \nonumber\\
     &\quad \times \sum_{\ell=0}^{N} 
     \frac{(n-N,m-N,-N-m+2\beta+1)_{N-\ell}}{(N-\ell)!(-N,n-N-\alpha+\beta+2)_{N-\ell}}
\nonumber\\
&= -
  \dfrac{a_{n}a_{n+1}\cdots a_{N-1}}{a_{m}a_{m+1}\cdots a_{N-1}}  
     \frac{(N-m)!(-N)_{N-m}(n-N-\alpha+\beta+2)_{N-n}}{(n-N)_{N-n}(m-N,-N-m+2\beta+1)_{N-m}}
 \nonumber\\
     &\quad \times \hg{3}{2}\argu{n-N, m-N, -N-m+2\beta +1}{N, n-N-\alpha+\beta+2}{1}.
\end{align}
Now use \eqref{TF2} successively twice to arrive at:
\begin{align}
 \tilde U_m(n) &= -
  \dfrac{a_{n}a_{n+1}\cdots a_{N-1}}{a_{m}a_{m+1}\cdots a_{N-1}}  
     \frac{(-1)^{N-n}(N-m)!(-N)_{N-m}(-N+\alpha+\beta)_{N-n}}{(n-N)_{N-n}(m-N)_{N-m}(-N-m+2\beta+1)_{N-m}}
 \nonumber\\
     &\quad \times \hg{3}{2}\argu{n-N, -m, m-2\beta -1}{N, n-\alpha-\beta+1}{1}, \label{almost}
\end{align}
having taken note that
\begin{equation}
    (m+n-\alpha-\beta+1)_{N-n} = (-1)^{N-n}(-m-N+\alpha+\beta)_{N-n}.
\end{equation}
From definition \eqref{cV} one sees that 
\begin{equation}
    \hg{3}{2}\argu{n-N, -m, m-2\beta -1}{_N, n-\alpha-\beta+1}{1} = \frac{(-1)^m (N-2\beta)_m}{(-N)_m}\cV_m(n, \alpha-\beta, N-2\beta-1, N). \label{VV}
\end{equation}
Integrating this observation in \eqref{almost}, using the easily proven identities
\begin{equation}
    (-1)^m (N-\kappa+1)_m (\kappa-N)_{N-2m}= (-N-m+\kappa)_{N-m}
\end{equation}
and $(m+1)_{N-m}=(-1)^{N-m}(-N)_{N-m}$ as well as the fact that  $(-n)_n =(-1)^n n!$ yields the identification \eqref{UTcv} upon introducing $a$ and $b$ as per \eqref{parab}.

\end{proof}
\subsection{Biorthogonality}
A key result is the biorthogonality of the rational functions of Hahn type. 
\begin{pr} The rational functions $\cU_m(n;a,b,N)$ and $\cV_m(n;a,b,N)$ satisfy the following biorthogonality relations:
 \begin{align}
&\sum_{n=0}^N  
\cW(n)\,\cV_m(n;a,b,N)\,
\cU_{m'}(n;a,b,N) = h_m \delta_{m,m'},\label{1storth}\\
&\sum_{m=0}^N  
\cW^*(m)\,\cV_m(n;a, b,N)
\,\cU_m(n';a, b,N) = h_n^*\delta_{n,n'} \label{2ndorth}.   
\end{align}
where
\begin{align}
& h_m = \frac{(1,-N,m-N+b)_m(2m-N+b+1)_{N-2m}}{(b+1)_m(b-N+1)_N},
\\
& \cW(n) =
\frac{(a-b-1)_{N-n}(1-a)_n }{(-b)_{N}}\frac{N!}{n!(N-n)!},\\
&h_n^*=\frac{(1,2-a+b-N)_n}{(-N,1-a)_n},\\
& \cW^*(m) = 
\frac{(2-a+b-N)_N(b+1)_m}{(1,-N,m+b-N)_m(2m+b-N+1)_{N-2m}}.
\end{align}   
\end{pr}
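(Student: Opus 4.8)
The biorthogonality relations \eqref{1storth} and \eqref{2ndorth} are, at bottom, nothing more than the model-independent orthogonalities \eqref{orthUUtilde} and \eqref{dualorthUUtilde} rewritten through the identifications \eqref{Ucu} and \eqref{UTcv}. The plan is therefore to substitute the expressions for $U_m(n)$ and $\tilde U_m(n)$ into \eqref{orthUUtilde}, recalling $\kappa_m=1$ and $w_n=-1$ from Section \ref{Secion 5}, so that \eqref{orthUUtilde} reads $-\sum_{n=0}^N \tilde U_m(n)U_{m'}(n)=\delta_{m,m'}$, and then to read off the weight $\cW(n)$ and the constant $h_m$. The dual relation \eqref{2ndorth} will be obtained in exactly the same way from \eqref{dualorthUUtilde}.

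The first key step, which makes the whole scheme work, is the telescoping of the auxiliary normalization constants $a_n$. In the product $\tilde U_m(n)U_{m'}(n)$ these enter only through $(a_n\cdots a_{N-1})(a_0\cdots a_{n-1})=a_0\cdots a_{N-1}$, which is manifestly independent of $n$; the surviving $a$-dependence is then the $n$-independent ratio $a_0\cdots a_{N-1}/[(a_m\cdots a_{N-1})(a_0\cdots a_{m'-1})]$, which reduces to $1$ on the support of $\delta_{m,m'}$. This is exactly as it must be, since the rational functions $\cU_m$, $\cV_m$, and hence their biorthogonality, carry no memory of the arbitrary $a_n$.

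Once the $a_n$ are disposed of, the residual $n$-dependent factors in $-\tilde U_m(n)U_{m'}(n)$ are $(a-b-1)_{N-n}(1-a)_n/[(N-n)!\,n!]$, which is precisely $\cW(n)\,(-b)_N/N!$; this identifies the weight. Gathering the remaining $m$-dependent prefactors at $m=m'$ then produces $h_m=\frac{N!}{(-b)_N}\,\frac{(-N)_m(-b)_{N-2m}(m+b-N)_m}{(m+1)_{N-m}(1+b)_m}$, and it remains to check that this coincides with the stated expression. I expect this Pochhammer bookkeeping, rather than any conceptual difficulty, to be the main obstacle. It is handled by the elementary reductions $N!/(m+1)_{N-m}=m!$ and the reflection identity $(x)_k=(-1)^k(-x-k+1)_k$, the latter giving $(-b)_{N-2m}/(-b)_N=(2m-N+b+1)_{N-2m}/(b-N+1)_N$, while keeping track throughout of the parameter dictionary \eqref{parab}.

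Finally, the dual relation \eqref{2ndorth} follows by substituting \eqref{Ucu} and \eqref{UTcv} into \eqref{dualorthUUtilde}, now summing over $m$ with $n=n'$ held fixed. The $a_n$ again telescope to an $m$-independent constant equal to $1$ on the diagonal, the surviving $m$-dependent factors assemble into $\cW^*(m)$, and the residual $n$-dependent prefactors produce $h_n^*$, the required simplifications being of the same reflection-and-shift type as above.
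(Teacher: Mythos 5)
Your proposal is correct and takes essentially the same route as the paper's own proof: both substitute \eqref{Ucu} and \eqref{UTcv} into the abstract biorthogonality relations \eqref{orthUUtilde}--\eqref{dualorthUUtilde} with $\kappa_m=1$ and $w_n=-1$, note that the gauge constants $a_n$ cancel (telescoping to $1$ on the diagonal), and perform the same Pochhammer bookkeeping, namely $N!=m!\,(m+1)_{N-m}$, $(-1)^n(-N)_n(N-n)!=N!$, and the reflection identities converting $(-b)_{N-2m}$ and $(-b)_N$ into $(2m+b-N+1)_{N-2m}$ and $(b-N+1)_N$. Your intermediate expression for $h_m$ indeed reduces to the stated one, and the dual case goes through exactly as you sketch, matching the paper's use of $(2-a+b-N)_N=(-1)^{N-n}(2-a+b-N)_n(a-b-1)_{N-n}$.
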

\begin{proof}
 It was observed quite generally in \eqref{orthUUtilde} and \eqref{dualorthUUtilde} that the overlaps $U_m(n)$ and $\tilde U_m(n)$ are biorthogonal. Recall that we have chosen normalizations so that $\kappa_n =1$ and $w_n=-1$ $\forall n$. Hence substituting the expressions \eqref{Ucu} and \eqref{UTcv} for $U_m(n)$ and $\tilde U_m(n)$ in terms of $\cU_m(n;a,b,N)$ and $\cV_m(n;a,b,N)$ in

 \begin{align}
    &\sum_{n=0}^N (-1)\Tilde{U}_m(n) U_{m'}(n) = \delta _{m,m'}, \\
    &\sum_{m=0}^N (-1)\Tilde{U}_m(n) U_m(n')  =  \delta _{n,n'} ,
\end{align}
gives the formulas recorded in the above proposition. 
Note that the weight functions $\cW$ and $\cW^*$ are specified by requesting the normalizations $h_0=h_0^*=1$, respectively. Technically, for presentation convenience, in obtaining \eqref{1storth}, both sides have been multiplied by $N!/(-b)_N$ and the identities 
\begin{equation}
    N!=m!(m+1)_{N-m} \quad \text{and \quad}(-1)^N (2m+b-N+1)_{N-2m} = (-b)_{N-2m} \label{simpleid}
\end{equation}
are used. Similarly to get \eqref{2ndorth}, both sides have been multiplied by $(2-a+b-N)_N/N!$ and the identities $(-1)^n (-N)_n(N-n)!=N!$, \eqref{simpleid} and
\begin{equation}
    (2-a+b-N)_N=(-1)^{N-n}(2-a+b-N)_n(a-b-1)_{N-n}
\end{equation}
are called upon. 
\end{proof}

\subsection{Bispectrality of $\cU_m(n)$}

\subsubsection{Recurrence relation}
\begin{pr}
     The rational function $\cU_m(n; a, b, N)$ of Hahn type obeys the following recurrence relation
\begin{align}
   & (n-m-a)\cA_m (\cU_{m+1}(n;a,b,N)-\cU_{m}(n;a,b,N)) \nonumber\\
&+(n+m-a+b-N)\cC_m (\cU_{m-1}(n;a,b,N)-\cU_{m}(n;a,b,N))=a(2m+b-N) \,\cU_{m}(n;a,b,N)
\label{cU:RRalt}
\end{align}

where
\begin{align}
   &\cA_m=\frac{(m+b+1)(m+b-N)}{(2m+b-N+1)}, \label{cA}\\  
   &\cC_m=\frac{m(m-N-1)}{(2m+b-N-1)} \label{cC}. 
\end{align}
\end{pr}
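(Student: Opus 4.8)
The plan is to derive the recurrence relation for $\cU_m(n;a,b,N)$ from the representation-theoretic fact that the GEVP-EVP overlap $U_m(n)=\bra{e_m}\ket{d_n^*}$ satisfies a three-term recurrence in $m$ governed by the tridiagonal actions of $X$ and $Z$ in the $e$ basis. The starting point is the GEVP $X^{\top}\ket{d_n^*}=\lambda_n Z^{\top}\ket{d_n^*}$ with $\lambda_n=\alpha-n$, which I rewrite as $(X^{\top}-\lambda_n Z^{\top})\ket{d_n^*}=0$. Pairing this with $\ket{e_m}$ and moving the operators onto the bra gives
\begin{equation}
 \bra{d_n^*}\ket{(X-\lambda_n Z)\mid e_m}=0,
\end{equation}
so that $\bra{e_m}\ket{(X^{\top}-\lambda_n Z^{\top})\mid d_n^*}=\bra{d_n^*}\ket{(X-\lambda_n Z)\mid e_m}=0$. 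Since $X$ and $Z$ are both tridiagonal in the $e$ basis by \eqref{action:Zone}--\eqref{action:Xone:coe3}, the operator $X-\lambda_n Z$ is tridiagonal too, and expanding its action on $\ket{e_m}$ yields a three-term recurrence relating $U_m(n)$, $U_{m+1}(n)$ and $U_{m-1}(n)$ with coefficients built from $X^{(e)}_{m\pm1,m}-\lambda_n Z^{(e)}_{m\pm1,m}$ and $X^{(e)}_{m,m}-\lambda_n Z^{(e)}_{m,m}$.

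The next step is to translate this operator-level recurrence into one for the pure rational function $\cU_m(n;a,b,N)$ using the proportionality \eqref{Ucu}. Substituting $U_m(n)=\frac{a_0\cdots a_{n-1}}{a_0\cdots a_{m-1}}\frac{(1-a)_n(1+b)_m}{n!(m+b-N)_m}\cU_m(n;a,b,N)$, I must divide out the $n$-dependent prefactors (which are common to all three terms) and carry the $m$-dependent prefactors through the shifts $m\to m\pm1$. This produces ratios such as $\frac{a_0\cdots a_{m}}{a_0\cdots a_{m-1}}=a_m$ and the Pochhammer ratios $\frac{(1+b)_{m+1}}{(1+b)_m}$ and $\frac{(m+b-N)_m}{(m+1+b-N)_{m+1}}$, which conveniently cancel the explicit $a_m$ and $a_{m-1}$ appearing in $X^{(e)}_{m\pm1,m}$ and $Z^{(e)}_{m\pm1,m}$. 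Converting the algebra parameters via $\alpha-n=\lambda_n$ together with the dictionary \eqref{parab}, namely $a=\alpha-\beta$ and $b-N=-2\beta-1$, turns the eigenvalue $\lambda_n$ into the combination $n-\alpha$ that must reorganize into the factors $(n-m-a)$ and $(n+m-a+b-N)$ appearing in \eqref{cU:RRalt}.

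The main obstacle I expect is the bookkeeping needed to cast the recurrence in the particular \emph{difference} form \eqref{cU:RRalt}, where each off-diagonal term is written as a coefficient times $\bigl(\cU_{m\pm1}-\cU_m\bigr)$ rather than as three independent terms. Achieving this form requires that the diagonal contribution $X^{(e)}_{m,m}-\lambda_n Z^{(e)}_{m,m}$, after the parameter substitution, splits precisely into $-(n-m-a)\cA_m-(n+m-a+b-N)\cC_m+a(2m+b-N)$, so that collecting the $\cU_m$ terms reproduces the stated right-hand side. This is the delicate point: I would verify it by computing $\cA_m$ and $\cC_m$ as the renormalized versions of the super- and subdiagonal coefficients $X^{(e)}_{m+1,m}-\lambda_n Z^{(e)}_{m+1,m}$ and $X^{(e)}_{m-1,m}-\lambda_n Z^{(e)}_{m-1,m}$ (noting that the $\lambda_n$-linear pieces are exactly what generate the linear-in-$n$ prefactors $(n-m-a)$ and $(n+m-a+b-N)$, while the constant pieces feed into the diagonal), and then checking that the residual diagonal terms balance. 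Once the off-diagonal coefficients are shown to reduce to the compact expressions \eqref{cA} and \eqref{cC} — where I anticipate the factors $(2m+b-N\pm1)$ in the denominators to arise from the $(\beta-n+\tfrac12)$-type denominators in $Z^{(e)}_{m-1,m}$ under the substitution $b-N=-2\beta-1$ — the identity \eqref{cU:RRalt} follows by assembling the three pieces.
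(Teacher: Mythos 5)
Your proposal matches the paper's proof essentially step for step: the paper likewise starts from $\bra{e_m}\ket{(X^{\top}-\lambda_nZ^{\top})\mid d_n^*}=0$, uses the tridiagonality of $X$ and $Z$ in the $e$ basis (exploiting $X^{(e)}_{m+1,m}=(\beta-m)Z^{(e)}_{m+1,m}$, $X^{(e)}_{m-1,m}=-(\beta-m+1)Z^{(e)}_{m-1,m}$, $X^{(e)}_{m,m}=\tfrac{N}{2}-\alpha$, which is exactly how your linear-in-$n$ prefactors $(n-m-a)$ and $(n+m-a+b-N)$ arise), then substitutes the gauge \eqref{Ucu} with the dictionary \eqref{parab} and simplifies after multiplying by $(2m+b-N)$. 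The only blemishes are cosmetic bookkeeping (the off-diagonal coefficients factor completely, so no ``constant pieces'' feed the diagonal, and your sign convention in the diagonal split is off by an overall sign), neither of which affects the argument.
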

\begin{proof}
    The recurrence relation for $U_m(n)=\bra{e_m}\ket{d_n^*}$ is obtained from 
\begin{align*}
    \lambda_n \bra{e_m}\ket{Z^{\top}\mid d_n^*}=
    \bra{e_m}\ket{\lambda_n Z^{\top}\mid d_n^*}=
    \bra{e_m}\ket{X^{\top}\mid d_n^*}
\end{align*}   
recalling that $\ket{d_n^*}$ are solutions of a GEVP and that $X$ and $Z$ act tridiagonally on the $e$-basis. More precisely, the identity $\bra{e_m}\ket{X^{\top}-\lambda_nZ^{\top} \mid d_n^*}=(\bra{e_m}\ket{X-\lambda _n Z)\mid d_n^*}= 0$ is readily seen to imply
    \begin{align}
&X^{(e)}_{m+1,m}U_{m+1}(n) +X^{(e)}_{m,m} U_{m}(n)+X^{(e)}_{m-1,m} U_{m-1}(n)
\nonumber\\
&=(\alpha-n)\left(Z^{(e)}_{m+1,m}U_{m+1}(n) +Z^{(e)}_{m,m} U_{m}(n)+Z^{(e)}_{m-1,m} U_{m-1}(n)\right) \label{recU}
\end{align}
from where the recurrence relation for $\cU_m(n; a, b, N)$ will be extracted. From \eqref{action:Zone:coe1}-\eqref{action:Xone:coe3}, it is observed that
\begin{equation}
  X^{(e)}_{m+1,m}=(\beta - m) Z^{(e)}_{m+1,m}, \quad  X^{(e)}_{m-1,m}= -(\beta -m +1)Z^{(e)}_{m-1,m}, \quad X^{(e)}_{m,m}=\frac{N}{2}- \alpha.
\end{equation}
As a result \eqref{recU} can be recast in the form:
\begin{align}
    (n-m-\alpha +\beta )Z^{(e)}_{m+1,m}U_{m+1}(n) &+ \left[\frac{N}{2}-\alpha +(n-\alpha)Z^{(e)}_{m,m}\right] U_{m}(n) \nonumber\\
    &+(n+ m-\alpha -\beta -1)Z^{(e)}_{m-1,m} U_{m-1}(n)=0.
\end{align}
  Inserting in this equation the expression \eqref{Ucu} of  $U_{m}(n)$ in terms of $\cU_m(n;a,b,N)$ and using \eqref{parab} one finds
  \begin{align}
     &(n-m-a)Z^{(e)}_{m+1,m}\frac{(m+b+1)(m+b-N)}{a_m(2m+b-N)(2m+b-N-1)}\cU_{m+1}(n) \nonumber\\ 
     &+ \left[-a+\frac{b}{2} +\frac{1}{2}+\left(n-a+\frac{b}{2}-\frac{N}{2}+\frac{1}{2}\right)Z^{(e)}_{m,m}\right] \cU_{m}(n) \nonumber\\
    &+(n+ m-a+b-N)Z^{(e)}_{m-1,m}\frac{a_{m+1}(2m+b-N-2)(2m+b-N-1)}{(m+b)(m+b-N-1)} \cU_{m-1}(n)=0.   
  \end{align}
Calling upon the expressions \eqref{action:Zone:coe1}, \eqref{action:Zone:coe2}, \eqref{action:Zone:coe3} for the matrix elements of $Z$ in the $e$-basis, multiplying by $(2m+b-N)$ and simplifying, one arrives at \eqref{cU:RRalt}.
\end{proof}
By rearranging \eqref{cU:RRalt}, it is possible to reexpress this recurrence relation in a nice GEVP form. 
\begin{pr}
    The rational function $\cU_m(n; a, b, N)$ of Hahn type verifies the following difference GEVP equation:
    \begin{align}
&(2m+b-N)\Big(\cA_{m}\,\cU_{m+1}(n;a,b,N)-(\cA_{m}-\cC_{m}-2a)\,\cU_{m}(n;a,b,N)-\cC_{m}\,\cU_{m-1}(n;a,b,N)\Big) \nonumber\\
&=(2n-2a+b-N)\Big(\cA_{m}\,\cU_{m+1}(n;a,b,N)-(\cA_{m}+\cC_{m})\,\cU_{m}(n;a,b,N)+\cC_{m}\,\cU_{m-1}(n;a,b,N)\Big),\label{cU:RR}
\end{align}
where $\cA_m$ and $\cC_m$ are again given by \eqref{cA} and \eqref{cC}.
\end{pr}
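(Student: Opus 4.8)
The plan is to recognize that equation \eqref{cU:RR} is nothing more than an algebraic recasting of the recurrence \eqref{cU:RRalt} established in the preceding proposition, and hence to prove it by a direct rearrangement rather than by any fresh computation. Both relations are three-term linear relations in the triple $\cU_{m-1}(n;a,b,N)$, $\cU_{m}(n;a,b,N)$, $\cU_{m+1}(n;a,b,N)$ with coefficients built from the same functions $\cA_m$ and $\cC_m$ of \eqref{cA}, \eqref{cC}. I would therefore first bring all terms of \eqref{cU:RRalt} to one side so that it reads as the vanishing of
\begin{equation*}
(n-m-a)\cA_m\,\cU_{m+1} + (n+m-a+b-N)\cC_m\,\cU_{m-1} - \big[(n-m-a)\cA_m + (n+m-a+b-N)\cC_m + a(2m+b-N)\big]\cU_m,
\end{equation*}
where I suppress the arguments $(n;a,b,N)$ for brevity. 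The goal is then to show that the left-hand side of \eqref{cU:RR} minus its right-hand side equals $-2$ times this expression.

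To carry this out I would expand the two bracketed tridiagonal expressions in \eqref{cU:RR}, each multiplied by its scalar prefactor $(2m+b-N)$ and $(2n-2a+b-N)$ respectively, and collect the coefficients of $\cU_{m+1}$, $\cU_m$ and $\cU_{m-1}$ separately. The off-diagonal coefficients come out at once: the coefficient of $\cU_{m+1}$ is $\big[(2m+b-N)-(2n-2a+b-N)\big]\cA_m = -2(n-m-a)\cA_m$, and that of $\cU_{m-1}$ is $-\big[(2m+b-N)+(2n-2a+b-N)\big]\cC_m = -2(n+m-a+b-N)\cC_m$. These are exactly $-2$ times the $\cU_{m+1}$ and $\cU_{m-1}$ coefficients isolated above.

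The only step requiring some care is the diagonal coefficient of $\cU_m$, where the constant $-2a$ sitting inside the first bracket and the $\pm\cC_m$ appearing with opposite signs in the two brackets have to be combined. Collecting these contributions gives $-(2m+b-N)(\cA_m-\cC_m-2a)+(2n-2a+b-N)(\cA_m+\cC_m)$; grouping the $\cA_m$ terms yields $2(n-m-a)\cA_m$, the $\cC_m$ terms yield $2(n+m-a+b-N)\cC_m$, and the remaining piece is precisely the inhomogeneous term $2a(2m+b-N)$. This reproduces $-2$ times the diagonal coefficient displayed above, and together with the two off-diagonal matches it establishes that \eqref{cU:RR} and \eqref{cU:RRalt} are equivalent up to the overall factor $-2$. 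I expect this $\cU_m$ bookkeeping to be the main obstacle, though it is light; a convenient device to streamline it is to substitute $2n-2a+b-N = (2m+b-N)+2(n-m-a)$ throughout, which immediately exposes the common factor $(2m+b-N)$ and isolates the extra $(n-m-a)$ pieces that account for the off-diagonal terms.
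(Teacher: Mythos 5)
Your proposal is correct and matches the paper exactly: the paper offers no separate proof of \eqref{cU:RR}, stating only that it is obtained ``by rearranging \eqref{cU:RRalt}'', and your coefficient-by-coefficient verification (with the overall factor $-2$) simply makes that rearrangement explicit. The bookkeeping for the $\cU_{m+1}$, $\cU_m$ and $\cU_{m-1}$ coefficients checks out, so nothing is missing.
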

Note that the eigenvalues differs from those of the GEVP for the vectors $\ket{d_n}$ and $\ket{d_n^*}$. This is keeping with fact that in GEVPs linear combinations of the operators induce homographic transformations of the eigenvalues.

\subsubsection{Difference equation}
\begin{pr} \label{diifcU}
   The rational function $\cU_m(n; a, b, N)$ of Hahn type obeys the following difference equation  \begin{align}
 &   \cB_n \,\cU_{m}(n+1;a,b,N)
 - (\cB_n+\cD_{n}) \,\cU_{m}(n;a,b,N)
    + \cD_{n} \,\cU_{m}(n-1;a,b,N)\nonumber\\
& \qquad = m (m+b-N) \Big((a-n) \,\cU_{m}(n;a,b,N)+n\,\cU_{m}(n-1;a,b,N)\Big) \label{diffeqcU}
\end{align}
where
\begin{align}
  &\cB_n=(n-a)(n-a+1)(n-N),\label{diffeqcU_coecB}\\
   &\cD_{n}=(n-a)(n-a+b-N) n.
\end{align}
\end{pr}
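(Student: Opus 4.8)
The plan is to mirror the strategy used for the recurrence relation, extracting the difference equation of $\cU_m(n;a,b,N)$ from the action of a generator that is tridiagonal in the relevant basis. Here, since the second index $n$ labels the GEVP vectors $\ket{d_n^*}$ (for the transposed problem) or equivalently $Z\ket{d_n}$, I would exploit the fact, recorded in Section 6, that $VZ$ (or $V^\top Z^\top$) acts tridiagonally on the GEVP bases $d$ and $d^*$. Concretely, I would start from $U_m(n)=\bra{e_m}\ket{d_n^*}$ and compute the matrix element of the operator $V^\top Z^\top$ between $\ket{e_m}$ and $\ket{d_n^*}$ in two ways: letting $V^\top Z^\top$ act to the right on $\ket{d_n^*}$ produces a three-term combination in the variable $n$ (this is the tridiagonality of $V^\top Z^\top$ in the $d^*$ basis), while letting the transpose $Z V$ act to the left on $\bra{e_m}$ produces a relation diagonalized through the $V$-eigenvalue. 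The crucial point is that $\ket{e_m}$ is an eigenvector of $V$ with eigenvalue $\mu_m=(\beta-m)(m-\beta-1)$, and $Z$ maps the $e$-basis into a combination; but since we want a pure difference equation in $n$ (not a three-term recurrence in $m$), the cleaner route is to act with $V$ on the $e$-side.

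More precisely, I would use $\bra{e_m}\ket{V^\top Z^\top \mid d_n^*} = (\bra{e_m}\mid Z V)\ket{d_n^*}$ and observe that $V$ acting to the left on $\bra{e_m}$ just yields the scalar $\mu_m$, so the left-hand side collapses to $\mu_m \bra{e_m}\ket{Z^\top \mid d_n^*}$, while the right-hand side is the tridiagonal action of $V^\top Z^\top$ on $\ket{d_n^*}$. This produces an equation of the schematic form
\begin{align}
\mu_m\,\bra{e_m}\ket{Z^\top\mid d_n^*}
&= (V^\top Z^\top)^{(d^*)}_{n+1,n}\,\bra{e_m}\ket{Z^\top\mid d_{n+1}^*}\nonumber\\
&\quad +(V^\top Z^\top)^{(d^*)}_{n,n}\,\bra{e_m}\ket{Z^\top\mid d_{n}^*}
+(V^\top Z^\top)^{(d^*)}_{n-1,n}\,\bra{e_m}\ket{Z^\top\mid d_{n-1}^*},\nonumber
\end{align}
where the matrix elements are those recorded in Appendix A. The overlaps $\bra{e_m}\ket{Z^\top\mid d_n^*}$ must then be re-expressed in terms of $U_m(n)=\bra{e_m}\ket{d_n^*}$; this is where formula \eqref{Zdn} (and its $d^*$ analog) enters, showing that applying $Z^\top$ shifts the parameter $\alpha\to\alpha-1$ and, after relating back to $U_m$, introduces the characteristic combination $(a-n)\cU_m(n)+n\cU_m(n-1)$ appearing on the right-hand side of \eqref{diffeqcU}.

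The main obstacle, as in the proof of the recurrence relation, will be the bookkeeping in converting the algebraic three-term identity into the stated hypergeometric form: one must substitute the explicit expressions \eqref{Ucu} for $U_m(n)$ in terms of $\cU_m(n;a,b,N)$, insert the parameter dictionary \eqref{parab}, cancel the ratios $a_0\cdots a_{n-1}$ and the Pochhammer prefactors that differ between the $n$, $n\pm1$ terms, and verify that the surviving coefficients assemble into $\cB_n$, $\cD_n$ and the factor $m(m+b-N)$. The appearance of $Z^\top$ (rather than the identity) on the overlaps is the subtle part: it forces the asymmetric right-hand side $m(m+b-N)\big((a-n)\cU_m(n)+n\cU_m(n-1)\big)$ rather than a term proportional to a single $\cU_m(n)$, and getting this shift structure to match cleanly will require the parameter-shift identity \eqref{Zdn} together with the $\mu_m=m(m+b-N)+\text{const}$ rewriting analogous to the one used in the difference equation for the Hahn polynomials. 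Once the coefficients are matched the result follows, and I would note that initiating instead from $\tilde U_m(n)=\bra{e_m^*}\ket{Z\mid d_n}$ yields the same equation by the transpose symmetry.
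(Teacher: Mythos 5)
Your proposal is, in substance, the paper's own proof. The paper starts from exactly the identity you describe, $\mu_m\bra{e_m}\ket{Z^{\top}\mid d_n^*}=(\bra{e_m\mid V})\,Z^{\top}\ket{d_n^*}=\bra{e_m}\ket{V^{\top}Z^{\top}\mid d_n^*}$, expands the right side through the tridiagonal action of $V^{\top}Z^{\top}$ on the $d^*$ basis with the matrix elements \eqref{VZtop1}--\eqref{VZtop3}, expands the left side with \eqref{ZTond*}, and then substitutes \eqref{Ucu} and the parameter dictionary \eqref{parab} to reach \eqref{diffeqcU}. Your identification of the single $Z^{\top}$ on the eigenvalue side as the origin of the asymmetric right-hand side $m(m+b-N)\big((a-n)\,\cU_m(n)+n\,\cU_m(n-1)\big)$ is exactly the mechanism at work in the paper.

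One slip to fix: in your displayed schematic equation the overlaps on the right-hand side should be $\bra{e_m}\ket{d_{n\pm1}^*}=U_m(n\pm1)$ and $\bra{e_m}\ket{d_n^*}=U_m(n)$, \emph{without} the extra $Z^{\top}$. The tridiagonality of $V^{\top}Z^{\top}$ holds in the $d^*$ basis itself, so acting to the right already produces plain $U_m$'s; only the left-hand side carries the $Z^{\top}$, which \eqref{ZTond*} converts into $\mu_m\big(-U_m(n)+a_{n-1}U_m(n-1)\big)$. Taken literally, your display is false as an identity (it would require $ZV\ket{e_m}=\mu_m Z\ket{e_m}$, i.e.\ that $Z$ commute with $V$), and re-expressing each $\bra{e_m}\ket{Z^{\top}\mid d_j^*}$ via \eqref{ZTond*} would then yield a spurious four-term relation involving $U_m(n-2)$ rather than the stated three-term equation. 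Since your surrounding prose describes the correct two-sided computation, this reads as a transcription error rather than a conceptual gap, but it must be corrected for the derivation to go through. A minor further note: the parameter-shift identity \eqref{Zdn} and its $d^*$ analog are what the paper invokes for the contiguity relations; for the difference equation the plain expansion \eqref{ZTond*} suffices.
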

\begin{proof}

The difference equation for $U_m(n)=\bra{e_m}\ket{d_n^*}$ is derived by focusing on the EVP $V\ket{e_m} = \mu _m \ket{e_m}$ involving the degree $m$ and the fact that $V^{\top}Z^{\top}$  acts tridiagonally on the GEVP basis vectors $\ket{d_n^*}$. From the relation
\begin{align*}
    \mu_m\bra{e_m}\ket{Z^{\top}\mid d_n^*}=
    (\bra{e_m\mid V} )\,Z^{\top}\ket{d_n^*}=
    \bra{e_m}\ket{V^{\top}Z^{\top}\mid d_n^*},
\end{align*} 
one finds
\begin{align} &(V^{\top}Z^{\top})_{n+1,n}^{(d*)} U_{m}(n+1)+(V^{\top}Z^{\top})_{n,n}^{(d*)}U_{m}(n)+(V^{\top}Z^{\top})_{n-1,n}^{(d*)}U_m(n-1)\nonumber\\
& =(\beta-m)(m-\beta-1) (-U_m(n)+a_{n-1} U_m(n-1))
\end{align}
with the help of \eqref{ZTond*} and recalling that $\mu_m =(\beta-m)(m-\beta-1)$. Inserting in this equation the expression \eqref{Ucu} for $U_m(n)$ in terms of $\cU_m(n)$, and writing for now $a=\alpha -\beta$ and $b=N- 2\beta -1$, one gets
\begin{align} &(V^{\top}Z^{\top})_{n+1,n}^{(d*)} a_n \frac{(n-\alpha-\beta +1)}{(n+1)} \cU_{m}(n+1)\nonumber \\
&+(V^{\top}Z^{\top})_{n,n}^{(d*)}\cU_{m}(n)+(V^{\top}Z^{\top})_{n-1,n}^{(d*)}\frac{n}{a_{n-1}(n-\alpha +\beta)}\cU_m(n-1)\nonumber\\
& =(\beta-m)(m-\beta-1) (-\cU_m(n)+\frac{n}{n-\alpha + \beta}\cU_m(n-1)).
\end{align}
Using at this point the expressions \eqref{VZtop1}, \eqref{VZtop2} and \eqref{VZtop3} for the matrix elements of $V^{\top}Z^{\top}$ in the basis $d^*$, after multiplying both sides of the equation by $(n-\alpha + \beta +1)$, one finds that $\cU_{m}(n;\alpha-\beta,N-2\beta-1,N)$ obeys the difference equation
\begin{align}
    & \hat{\cB}_n \,\cU_{m}(n+1;\alpha-\beta,N-2\beta-1,N)+ \hat{\cD}_n \,\cU_{m}(n-1;\alpha-\beta,N-2\beta-1,N)\nonumber\\
&\qquad + (\beta (\alpha-\beta)(\beta+1)-\hat{\cB}_n-\hat{\cD}_n) \,\cU_{m}(n;\alpha-\beta,N-2\beta-1,N)\nonumber\\
& = \mu_m\Big((n-\alpha+\beta)\,\cU_{m}(n;\alpha-\beta,N-2\beta-1,N) -n \,\cU_{m}(n-1;\alpha-\beta,N-2\beta-1,N)\Big) \label{nextlast}
\end{align}
where
\begin{align}
  &\hat{\cB}_n=(n-N)(n-\alpha+\beta)(n-\alpha+\beta+1),\\
   &\hat{\cD}_n=n (n-\alpha)(n-\alpha-1).
\end{align}

By rewriting the parameters $\alpha, \beta$ in terms of $a, b$ and rearranging \eqref{nextlast}, the difference equation in the variable $n$ satisfied by $\cU_m$ is found to be as stated in Proposition \ref{diifcU}.
\end{proof}

It is interesting to observe that the difference equation \eqref{diffeqcU} for $\cU_{m}(n;a,b,N)$ can be recast in the following form:
\begin{align}
     \cB_n \,\Big(\cU_{m}(n+1;a,b,N)-\cU_{m}(n;a,b,N)\Big)
& +\cD_{n,m} \,\Big(\cU_{m}(n-1;a,b,N)-\cU_{m}(n;a,b,N)\Big)
    \nonumber\\
&  = a m (m+b-N) \,\cU_{m}(n;a,b,N)
\end{align}
where $\cB_n$ given by \eqref{diffeqcU_coecB} and
\begin{align}
   &\cD_{n,m}=(n-m-a)(n+m-a+b-N)n.
\end{align}
Note that $\cD_{n,m}-\cD_{n} = -n m (m+b-N)$.

\subsection{Bispectrality of $\cV_m(n)$}

The rational functions $\cV_m(n;a,b,N)$ which are the biorthogonal partners of the functions $\cV_m(n;a,b,N)$ stand on their own. From the observation that they are obtained from $\cU_m(n;a,b,N)$ by a reflection of the variable and a substitution of parameters, they will necessarily be bispectral given that the  $\cU$ s are. It is nevertheless important to see that their bispectral properties independently follow from the present algebraic framework.

\subsubsection{Recurrence relation}
\begin{pr}
    The rational function  of Hahn type defined in \eqref{cV} satisfy the recurrence relation
    \begin{align}
   & (N-n-m-b+a-2)\cA_m (\cV_{m+1}(n;a,b,N)-\cV_{m}(n;a,b,N)) \nonumber\\
&+(-n+m+a-2)\cC_m \big(\cV_{m-1}(n;a,b,N)-\cV_{m}(n;a,b,N)\big)=(b-a+2)(2m+b-N) \,\cV_{m}(n;a,b,N)
\label{cV:RR}
\end{align}
with $\cA_m$ and $\cC_m$ given by \eqref{cA} and \eqref{cC}.
\end{pr}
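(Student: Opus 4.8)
The plan is to read off \eqref{cV:RR} from the recurrence relation \eqref{cU:RRalt} that has just been established for $\cU_m(n;a,b,N)$, using the defining relation \eqref{cV}, namely $\cV_m(x;a,b,N)=\cU_m(N-x;b+2-a,b,N)$. Since \eqref{cU:RRalt} is an identity for the rational function $\cU_m(n;a,b,N)$, it holds for generic values of its arguments and parameters; I would therefore simply evaluate it at $n\mapsto N-n$ and $a\mapsto b+2-a$, keeping $b$ and $N$ fixed, so that each $\cU_{m\pm 1}(N-n;b+2-a,b,N)$ and $\cU_m(N-n;b+2-a,b,N)$ turns into $\cV_{m\pm 1}(n;a,b,N)$ and $\cV_m(n;a,b,N)$.

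The first observation to record is that the coefficients $\cA_m$ and $\cC_m$ of \eqref{cA} and \eqref{cC} involve only $m$, $b$ and $N$, so they are untouched by the replacement $a\mapsto b+2-a$; consequently only the three scalar prefactors of \eqref{cU:RRalt} are altered and the whole computation collapses to three elementary affine checks,
\begin{align*}
    (n-m-a)\big|_{n\to N-n,\ a\to b+2-a}&=N-n-m-b+a-2,\\
    (n+m-a+b-N)\big|_{n\to N-n,\ a\to b+2-a}&=-n+m+a-2,\\
    a\,(2m+b-N)\big|_{a\to b+2-a}&=(b-a+2)(2m+b-N),
\end{align*}
whose right-hand sides are exactly the coefficients appearing in \eqref{cV:RR}. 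Matching the three terms then gives the statement verbatim. Because \eqref{cU:RRalt} was itself derived purely from the two-diagonal representation of $m\mathfrak{H}$, this already keeps the result within the present framework, and I expect no genuine obstacle here: the entire argument is the bookkeeping of an affine substitution.

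To exhibit the independent derivation anticipated in the text, I would alternatively rerun, for $\tilde{U}_m(n)=\bra{e_m^*}\ket{Z \mid d_n}$, the argument that produced the $\cU$ recurrence. The key enabling fact is \eqref{Zdn}: since $Z\ket{d_n}=-\ket{d_n}\big|_{\alpha\to\alpha-1}$ and $\ket{e_m^*}$ is $\alpha$-independent, $\tilde{U}_m(n)$ equals $-\bra{e_m^*}\ket{d_n}$ evaluated at $\alpha-1$, an ordinary overlap with no operator sandwiched in between. Moreover that shifted vector still solves a GEVP, namely the one obtained by sending $X\mapsto X+I$ and $\lambda_n\mapsto\lambda_n-1$ (recall that $X$ carries the $\alpha$-dependence on its diagonal while $Z$ does not). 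Contracting $\bra{e_m^*}$ with $(X+I-(\lambda_n-1)Z)$ applied to this vector gives zero, and expanding $X^{\top}$ and $Z^{\top}$ tridiagonally on $\ket{e_m^*}$ yields a three-term recurrence in $m$ for $\tilde{U}_m(n)$; inserting \eqref{UTcv} and converting parameters through \eqref{parab} returns \eqref{cV:RR}. Here the one point requiring care --- the main obstacle of this second route --- is precisely the recognition, via \eqref{Zdn}, that the $Z$ inside the overlap can be absorbed into an $\alpha$-shifted generalized eigenvector, so that the Leonard-type mechanism (GEVP annihilation together with tridiagonality in the $e^*$ basis) applies as cleanly as in the $\cU$ case.
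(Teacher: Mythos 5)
Your proposal is correct, and both of its routes succeed; the interesting point is how they sit relative to the paper. Your primary route --- substituting $n\to N-n$, $a\to b+2-a$ into \eqref{cU:RRalt} and checking the three affine coefficient transformations (all of which you compute correctly, and $\cA_m$, $\cC_m$ are indeed untouched since they involve only $m,b,N$) --- is \emph{not} the paper's proof: the paper relegates exactly this observation to the remark following the proposition (``This equation coincides with the one obtained under the rearrangement $a\to b-a+2$, $n\to N-n$\dots''), because the section explicitly announces that, although bispectrality of $\cV_m$ is automatic from \eqref{cV}, ``it is nevertheless important to see that their bispectral properties independently follow from the present algebraic framework.'' So your shortcut is legitimate mathematics (the recurrence holds identically in $a$, as $a=\alpha-\beta$ with $\alpha$ a free parameter, so the substitution is valid), but it buys only the formula, not the structural point the paper is after. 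Your alternate route is essentially the paper's proof with one step justified differently: the paper shows that $\ket{\tilde d_n}=Z\ket{d_n}$ satisfies the modified GEVP $(X+Z+1-\lambda_n Z)\ket{\tilde d_n}=0$ abstractly from the commutation relation \eqref{mHA1}, via $(X+Z+1-\lambda_n Z)Z\ket{d_n}=Z(X-\lambda_n Z)\ket{d_n}=0$, whereas you obtain the same modified GEVP from the representation-level $\alpha$-shift identity \eqref{Zdn} together with the observations that $X(\alpha-1)=X+I$, $\lambda_n(\alpha-1)=\lambda_n-1$, and that $\ket{e_m^*}$ is $\alpha$-independent; note $(X+I-(\lambda_n-1)Z)=(X+Z+1-\lambda_n Z)$, so the two statements agree on the nose. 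Both versions then contract with $\bra{e_m^*}$, use the tridiagonality of $X$ and $Z$ in the $e$-basis, and convert through \eqref{UTcv} and \eqref{parab}. The paper's derivation of the key lemma is model-independent (purely from the algebra), which is more in the spirit of its program, while yours leans on the explicit two-diagonal model --- interestingly, the paper itself uses your $\alpha$-shift mechanism elsewhere, in the proof of the contiguity relations, so your identification of \eqref{Zdn} as the crux is well placed.
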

\begin{proof}
    Let $\ket{\tilde d_n}= Z \ket{d_n}$. Recall that $\cV_m(n;a,b,N)$ is related to $\tilde U_m(n) = \bra{e_m^*}\ket{ Z\mid d_n}$ that is to $\bra{e_m^*}\ket{\tilde d_n}$.
    The first relation \eqref{mHA1} of the algebra shows that $\tilde d_n$ satisfies
\begin{align}
   (X+Z+1 - \lambda_n Z)\ket{\tilde d_n }= 0.
\end{align}
This is confirmed by
\begin{align}
    (X+Z+1-\lambda_n Z)\,Z \ket{d_n} = (XZ+Z^2+Z - Z X) \ket{d_n} =0.
\end{align}
Thus we obtain the 
relation
\begin{align*}
    \lambda_n \bra{e_m^*}\ket{ Z \mid \tilde d_n} &=
    \bra{e_m^*}\ket{\lambda_n Z^2 \mid d_n} =
    \bra{e_m^*}\ket{Z X \mid d_n} \nonumber\\
    &=\bra{e_m^*}\ket{XZ + Z^2+Z \mid d_n}  =
    \bra{e_m^*}\ket{X + Z+1 \mid \tilde d_n}
\end{align*}
that leads to the recurrence relations:
\begin{align}
  &(X_{m,m+1}^{(e)}+Z_{m,m+1}^{(e)}  )\tilde U_{m+1}(n) 
  +(X_{m,m}^{(e)}+Z_{m,m}^{(e)}+1)  \tilde U_{m}(n) +(X_{m,m-1}^{(e)} +Z_{m,m-1}^{(e)} ) \tilde U_{m-1}(n) 
\nonumber\\
&=\lambda_n\left(
Z_{m,m+1}^{(e)}  \tilde U_{m+1}(n)
+Z_{m,m}^{(e)}  \tilde U_{m}(n)
+Z_{m,m-1}^{(e)}  \tilde U_{m-1}(n)
\right).
\end{align}
As done before, rewriting in terms of $\cV_m(n;a,b,N)$ using \eqref{UTcv}, watching for the $m$-dependent factors and invoking \eqref{action:Zone:coe1}, \eqref{action:Zone:coe2}, \eqref{action:Zone:coe3}, \eqref{action:Xone:coe1}, \eqref{action:Xone:coe2} and \eqref{action:Xone:coe3} yields the
equation \eqref{cV:RR}.
\end{proof}
This equation coincides with the one obtained under the rearrangement  $a\to b-a+2, n \to N-n$ of the recurrence relation \eqref{cU:RR} satisfied by $\cU_m(n)$.

\subsubsection{Difference equation}
\begin{pr} \label{diifcV}
The rational function $\cV_m(n; a, b, N)$ of Hahn type defined in \eqref{cV} satisfies the difference equation  
\begin{align}
 &  \tilde\cB_{n,m} \Big(\cV_m(n+1;a,b,N) - \cV_m(n;a,b,N)\Big)
  + \tilde\cD_{n} \Big(\cV_m(n-1;a,b,N) - \cV_m(n;a,b,N)\Big)
  \nonumber \\
 & \qquad = m(m+b-N)(-a+b+2) \cV_m(n;a,b,N)
 \label{diffeqcV}
\end{align}
where
\begin{align}
  &\tilde\cB_{n,m} = (N-n-m+a-b-2)(-n+m+a-2)(N-n),\\
  &\tilde\cD_{n} = -(N-n+a-b-2)(N-n+a-b-1)\,n.
\end{align}
\end{pr}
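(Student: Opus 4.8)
The plan is to derive \eqref{diffeqcV} inside the algebraic framework in strict parallel with the difference equation for $\cU_m$ (Proposition \ref{diifcU}), interchanging the roles of the starred and unstarred eigenbases, and then to confirm the outcome by the reflection symmetry. Since $\tilde U_m(n) = \bra{e_m^*}\ket{Z\mid d_n}$ is tied to $\cV_m$ through \eqref{UTcv}, I would use the EVP $V^{\top}\ket{e_m^*}=\mu_m\ket{e_m^*}$ to furnish the spectral variable in $m$ and the tridiagonal action of $VZ$ on the GEVP basis $\{\ket{d_n}\}$ (Section 6.3 and Appendix A) to furnish the three-term structure in $n$. The entering identity is
\begin{align*}
\mu_m\,\bra{e_m^*}\ket{Z\mid d_n}
&= (\bra{e_m^*}\mid V^{\top})\,Z\ket{d_n}
= \bra{e_m^*}\ket{VZ\mid d_n},
\end{align*}
whose left side is $\mu_m\tilde U_m(n)$ and whose right side, expanded by the tridiagonal matrix elements of $VZ$ in the $d$-basis, is three-term in the index $n$.

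The main obstacle is the extra factor $Z$ carried by $\tilde U_m$. In the $\cU_m$ case the identity closes at once because $V^{\top}Z^{\top}$ tridiagonal on $d^*$ reproduces the very overlaps $U_m$, while $Z^{\top}$ acts on the $d^*$-basis as a clean two-term shift. Here, by contrast, $VZ$ tridiagonal on $d$ yields the \emph{bare} overlaps $\bra{e_m^*}\ket{d_k}$, and $Z$ acts on the $d$-basis not as a short combination but through the parameter shift \eqref{Zdn}; consequently the displayed identity relates $\mu_m\tilde U_m(n)$ to a three-term combination of these bare overlaps rather than of the $\tilde U_m$ themselves. The crux of the argument is to eliminate the bare overlaps so that the relation closes on the $\tilde U_m$: I would do this by keeping the factor $Z$ in place, i.e. by passing to $ZVZ$ (equivalently, by noting that $ZV$ is tridiagonal on the vectors $\ket{\tilde d_n}=Z\ket{d_n}$ introduced in the recurrence-relation proof, with the same matrix as $VZ$ on $d$, so that $\bra{e_m^*}\ket{ZVZ\mid d_n}=\sum_k (VZ)^{(d)}_{k,n}\,\tilde U_m(k)$) and reconciling the result with the clean eigenvalue side, or else by re-expressing the bare overlaps via \eqref{Zdn} together with the already-established recurrence in $m$. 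This is precisely the step that produces the characteristic $m$-dependent coefficient $\tilde\cB_{n,m}$.

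Once the relation is closed on $\tilde U_m$, the remainder is bookkeeping: substitute \eqref{UTcv} to pass from $\tilde U_m$ to $\cV_m$, track the $m$- and $n$-dependent prefactors, insert the explicit $VZ$-matrix elements in the $d$-basis, and convert the algebra parameters to $a,b$ via \eqref{parab}. Rewriting the eigenvalue $\mu_m=(\beta-m)(m-\beta-1)$ as $m(m+b-N)$ up to the constant $(-a+b+2)$ forced by the normalization of $\cV_m$, the coefficients should reorganize into the stated $\tilde\cB_{n,m}$ and $\tilde\cD_{n}$.

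Finally I would record the decisive independent check provided by the reflection symmetry. Because $\cV_m(n;a,b,N)=\cU_m(N-n;b+2-a,b,N)$ by \eqref{cV}, applying $n\to N-n$ and $a\to b+2-a$ to the recast form of the $\cU_m$ difference equation \eqref{diffeqcU} must reproduce \eqref{diffeqcV} term by term; under this substitution $\cB_n$ maps to $\tilde\cD_n$, the combined coefficient $\cD_{n,m}$ maps to $\tilde\cB_{n,m}$, and the right-hand side $a\,m(m+b-N)$ maps to $(-a+b+2)\,m(m+b-N)$, exactly matching the eigenvalue in \eqref{diffeqcV}. This fixes $\tilde\cB_{n,m}$, $\tilde\cD_{n}$ and the eigenvalue and confirms the algebraic computation.
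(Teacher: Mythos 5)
Your main algebraic route is essentially the paper's: its proof also starts from the eigenvalue identity with the extra factor of $Z$ kept in place, writing $\mu_m \bra{e_m^*}\ket{Z^2 \mid d_n} = \bra{e_m^*}\ket{VZ^2 \mid d_n}$ and then using the defining relations \eqref{mHA3} and \eqref{mHA1} to push $V$ through one $Z$,
\begin{equation*}
    VZ^2 = (ZV + 2X + \eta)Z = Z\bigl(VZ + \eta + 2X - 2Z - 2\bigr),
\end{equation*}
so that every term carries a leading $Z$ and the relation closes on $\tilde U_m(k) = \bra{e_m^*}\ket{Z \mid d_k}$ --- this is exactly the ``reconciliation'' you name as the crux but leave unperformed; it amounts to two one-line applications of the algebra, so your plan goes through. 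One factual slip in your framing of the obstacle: you claim $Z$ acts on the $d$-basis ``not as a short combination but through the parameter shift \eqref{Zdn}''; in fact \eqref{action:Zond} gives the clean two-term action $Z\ket{d_n} = -\ket{d_n} + a_n \ket{d_{n+1}}$ (\eqref{Zdn} is the restatement of the same fact in the standard basis), and this clean action is precisely what your own plan needs, both to obtain the left-hand side $\mu_m(-\tilde U_m(n) + a_n \tilde U_m(n+1))$ and to absorb the correction terms $\eta + 2X - 2Z - 2$ via $X\ket{d_n} = \lambda_n Z\ket{d_n}$; the genuine obstacle is only, as you correctly say, that the naive identity produces bare overlaps $\bra{e_m^*}\ket{d_k}$. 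Your observation that $\bra{e_m^*}\ket{ZVZ \mid d_n} = \sum_k (VZ)^{(d)}_{k,n}\,\tilde U_m(k)$ matches the paper's use of \eqref{action:VZond} with \eqref{VZtop1}--\eqref{VZtop3}, \eqref{VZcoe}. Finally, your reflection argument is more than a check: since \eqref{cV} \emph{defines} $\cV_m(n;a,b,N) = \cU_m(N-n;\,b+2-a,\,b,\,N)$, the substitution $n \to N-n$, $a \to b-a+2$ in the recast form of \eqref{diffeqcU}, with the coefficient mappings $\cB_n \mapsto \tilde\cD_n$ and $\cD_{n,m} \mapsto \tilde\cB_{n,m}$ that you correctly identify, is by itself a complete and rigorous proof of the proposition; the paper deliberately declines this shortcut only because it wants the bispectrality of $\cV_m$ to follow independently from the algebraic framework, which is what its (and your primary) derivation supplies.
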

\begin{proof}
Using the first and the second relations \eqref{mHA1}--\eqref{mHA2} of the algebra, 
one finds the following equations
\begin{align*}
    \mu_m \bra{e_m^*}\ket{Z \mid \tilde d_n} &=
    (\bra{e_m^*\mid V^{\top}})\, Z^2 \ket{d_n} =
    \bra{e_m^*}\ket{VZ^2 \mid d_n} \nonumber\\
    &=\bra{e_m^*}\ket{ (ZV+2X+\eta)Z \mid d_n}\nonumber\\
    &=\bra{e_m^*}\ket{ Z(VZ+\eta) + 2 XZ \mid d_n} \nonumber\\
    &=\bra{e_m^*}\ket{Z(VZ+\eta) + 2 (ZX-Z^2-Z)\mid d_n}\nonumber\\
    &=\bra{e_m^*}\ket{Z(VZ+\eta+2X-2Z-2) \mid d_n}
\end{align*}
and $Z \ket{\tilde d_n} =Z^2 \ket{d_n}= Z(-\ket{d_n} +a_n \ket{d_{n+1}})  = -\ket{\tilde d_n } + a_n\ket{\tilde d_{n+1}}$. Recalling in addition \eqref{Xond}.
these observations allow to introduce the difference equations:
\begin{align}
&\mu_m\left( - \tilde U_{m}(n)
+a_n \tilde U_{m}(n+1)
\right) \nonumber\\ 
&\quad =
((VZ)^{(d)}_{n+1,n}+2X^{(d)}_{n+1,n} -2 Z^{(d)}_{n+1,n} )\tilde U_{m}(n+1) \nonumber \\
&\qquad + ((VZ)^{(d)}_{n,n} +\eta + 2X^{(d)}_{n,n} -2 Z^{(d)}_{n,n} -2) \tilde U_{m}(n) + (VZ)^{(d)}_{n-1,n} \tilde U_{m}(n-1)
\end{align}
with $\eta=-N+2\alpha$. 
Substituting for $\Tilde{U}_m(n)$ the formula \eqref{UTcv}, paying attention to the remnants of the gauge factor and using \eqref{action:Zond}, \eqref{Xond}, \eqref{VZtop1}, \eqref{VZtop2}, \eqref{VZtop3}, \eqref{VZcoe}, 
gives the difference equation
\eqref{diffeqcV}.
\end{proof}
Here one can see that
\begin{align}
   \tilde\cB_{n,m} =\tilde\cB_{n,0} + m(m+b-N)(n-N).
\end{align}
Then \eqref{diffeqcV} can be rewritten in GEVP form as 
\begin{align}
 &  \tilde\cB_{n,0} \Big(\cV_m(n+1;a,b,N) - \cV_m(n;a,b,N)\Big)
  + \tilde\cD_{n} \Big(\cV_m(n-1;a,b,N) - \cV_m(n;a,b,N)\Big)
  \nonumber \\
 & \qquad = m(m+b-N) \Big((N-n) \cV_m(n+1;a,b,N) -(N-n+a-b-2)\cV_m(n;a,b,N) \Big).
\end{align}

\subsection{Contiguity relations}
\begin{pr}  
    The rational functions $\cU_m(n, a, b:N)$ satisfy the following contiguity relations
 \begin{align}
    &a \,\cU_m(n;a+1,b,N)  =(a-n) \,\cU_m(n;a,b,N)+n \,\cU_m(n-1;a,b,N), \label{cont1}
\end{align}
and
\begin{align}
 &\frac{a (2m+b-N)}{a-n}\cU_m(n;a+1,b,N) \nonumber\\
&\quad   = \cA_m \cU_{m+1}(n;a,b,N)-(\cA_m+\cC_m)\cU_m(n;a,b,N))+\cC_m \cU_{m-1}(n;a,b,N). \label{cont2}   
\end{align}
\end{pr}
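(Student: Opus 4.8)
The two contiguity relations relate $\cU_m(n;a+1,b,N)$ to Hahn functions at parameter $a$, one shifting the variable $n$ and the other the degree $m$. The plan is to prove \eqref{cont1} first by a direct hypergeometric manipulation and then to obtain \eqref{cont2} by combining \eqref{cont1} with the difference equation already established in Proposition \ref{diifcU}, thereby avoiding a second independent computation.

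First I would establish \eqref{cont1}. Writing $\cU_m(n;a,b,N)$ out via its defining $\hg{3}{2}$ series \eqref{cU}, the only $a$-dependence sits in the Pochhammer factor $(a-x)_{\ell}$ appearing in the denominator parameter, evaluated at $x=n$. The right-hand side $(a-n)\,\cU_m(n;a,b,N)+n\,\cU_m(n-1;a,b,N)$ is a two-term combination in the variable $n$, so the natural approach is to compare the two series term by term in the summation index $\ell$. The key identity to exploit is the elementary Pochhammer relation
\begin{equation}
  \frac{1}{(a-n)_{\ell}} = \frac{1}{(a+1-n)_{\ell}}\cdot\frac{a-n}{a-n+\ell},
\end{equation}
or equivalently a telescoping of $(-n)_{\ell}$ and $(a-n)_{\ell}$ under the shift $n\to n-1$; combining the $(-n)_\ell$ and $(a-n)_\ell$ factors shows that the contiguity in $a$ is traded for a contiguity in $n$. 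I would verify that the coefficient of each power in the series matches, so that \eqref{cont1} reduces to a verified coefficient identity. Alternatively, and perhaps more cleanly, one can recognize \eqref{cont1} as the representation-theoretic statement that the shift $a\to a+1$ corresponds, under the dictionary \eqref{parab}, to $\alpha\to\alpha+1$, and invoke the relation \eqref{Zdn} which precisely records how $Z\ket{d_n}$ reproduces $\ket{d_n}$ at shifted $\alpha$; taking the overlap with $\ket{e_m}$ and translating back through \eqref{Ucu} would yield \eqref{cont1} directly.

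For \eqref{cont2}, the plan is to feed \eqref{cont1} into the difference equation \eqref{diffeqcU}. The right-hand side of \eqref{diffeqcU}, namely $m(m+b-N)\big((a-n)\cU_m(n;a,b,N)+n\,\cU_m(n-1;a,b,N)\big)$, is exactly $m(m+b-N)$ times the right-hand side of \eqref{cont1}, hence equals $m(m+b-N)\,a\,\cU_m(n;a+1,b,N)$. On the other side, I would use the GEVP recast of the recurrence relation \eqref{cU:RR}: rearranging it isolates the three-term combination $\cA_m\cU_{m+1}-(\cA_m+\cC_m)\cU_m+\cC_m\cU_{m-1}$ appearing on the right of \eqref{cont2}. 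The strategy is thus to show that the left-hand side of the difference equation \eqref{diffeqcU}, after using \eqref{cont1} to eliminate the $n$-shifted terms, can be identified with the degree-shift combination via \eqref{cU:RR}, and then to solve for $a\,\cU_m(n;a+1,b,N)$, producing the factor $a(2m+b-N)/(a-n)$ after clearing the common denominator.

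The main obstacle I anticipate is bookkeeping rather than conceptual: matching the rational prefactors $\cA_m$, $\cC_m$ and the eigenvalue factors $(2m+b-N)$ and $(a-n)$ correctly when passing between the variable-shift form \eqref{cont1}, the difference equation \eqref{diffeqcU}, and the GEVP recurrence \eqref{cU:RR}. In particular, the appearance of the denominator $(a-n)$ in \eqref{cont2} signals that a division step is required, so I must track that the eigenvalue combinations genuinely factor through $(a-n)$ and that no spurious poles are introduced; verifying this factorization, and confirming that the two routes to the three-term $m$-combination agree identically as rational functions of $m,n,a,b,N$, is where the care is needed.
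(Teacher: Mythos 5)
Your treatment of \eqref{cont1} is sound: the term-by-term Pochhammer verification you sketch does work (the identity $(a-n)/(a-n)_{\ell}=(a-n+\ell)/(a+1-n)_{\ell}$ together with $(1-n)_{\ell}=(-n)_{\ell}(n-\ell)/n$ makes the two sides match coefficient by coefficient), and your ``alternative'' route is in fact the paper's route. One precision is needed there, however: since $U_m(n)=\bra{e_m}\ket{d_n^*}$ involves the \emph{transposed} GEVP vectors, the relation you must invoke is not \eqref{Zdn} itself (which concerns $Z\ket{d_n}$ and the shift $\alpha\to\alpha-1$, and whose overlap with $\bra{e_m}$ is neither $U$ nor $\tilde U$) but its transposed analogue $Z^{\top}\ket{d_n^*}=-\ket{d_n^*}\bigr|_{\alpha\to\alpha+1}$, proved the same way; combined with \eqref{ZTond*} and the $\alpha$-independence of $\bra{e_m}$ this gives \eqref{CR:U} and hence \eqref{cont1}, exactly as in the paper.

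For \eqref{cont2} there is a genuine gap: the relation is not a consequence of \eqref{cont1}, \eqref{diffeqcU} and \eqref{cU:RR} by the substitution you describe. First, \eqref{cont1} cannot ``eliminate the $n$-shifted terms'' on the left of \eqref{diffeqcU}: applied at shifted arguments it only trades $\cU_m(n\pm1;a,b,N)$ for new functions at parameter $a+1$, so nothing telescopes. Second, a coefficient-matching argument shows the target is outside the linear span of the two cited relations: any combination $\lambda\cdot\eqref{diffeqcU}+\mu\cdot\eqref{cU:RR}$ reproducing \eqref{cont2} must have no $\cU_m(n+1;a,b,N)$ term, forcing $\lambda\,\cB_n=0$, hence $\lambda=0$ since $\cB_n=(n-a)(n-a+1)(n-N)\neq0$ generically; but \eqref{cU:RR} contains no $n$-shifted functions at all, so the term $n\,\cU_m(n-1;a,b,N)$ needed (via \eqref{cont1}) to build $\cU_m(n;a+1,b,N)$ can never be produced. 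Your key step --- ``identifying'' the second-order $n$-shift combination with the three-term $m$-shift combination --- is, after using \eqref{diffeqcU}, literally equivalent to \eqref{cont2} itself, so the argument is circular; moreover the factor $m(m+b-N)$ on the right of \eqref{diffeqcU} kills all content at $m=0$, where \eqref{cont2} remains nontrivial, so no division step can recover it from that equation. The paper avoids all of this by evaluating the \emph{same} shifted overlap a second way: from $U_m(n)\bigr|_{\alpha\to\alpha+1}=-\bra{e_m}\ket{Z^{\top}\mid d_n^*}$ one lets $Z$ act on the bra and uses the tridiagonality of $Z$ in the $e$-basis, \eqref{action:Zone:coe1}--\eqref{action:Zone:coe3}, to get the three-term $m$-shift formula \eqref{CR2:U} directly, with no input from the bispectral equations; you should adopt that second evaluation. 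Incidentally, when checking prefactors be alert to overall signs: testing the printed \eqref{cont2} at $m=0$, where it should reduce to \eqref{cU:RRalt}, reveals a sign discrepancy between the two sides as stated, so careful bookkeeping of the gauge factors in \eqref{Ucu} (in particular $(1-a)_n\to(-a)_n=\frac{-a}{n-a}(1-a)_n$ under $a\to a+1$) is essential.
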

\begin{proof}

One can show that
  \begin{align*}
    Z^{\top}\ket{d_n^*} = - \ket{d_n^*} \Big|_{\alpha\to \alpha+1}
\end{align*}
in the same way that $Z\ket{d_n} = - \ket{d_n} \Big|_{\alpha\to \alpha-1}$ was proven in \eqref{Zdn}.

On the one hand, noticing that the vector $\bra{e_m}\mid$ does not depend on $\alpha$, 
 one obtains
\begin{align}
 U_{m}(n)\Big|_{\alpha\to\alpha+1} 
 &=- \bra{e_m}\ket{Z^{\top}\mid d_n^*}, 
\\
 U_{m}(n)\Big|_{\alpha\to\alpha+k} &= \bra{e_m}\ket{(-Z^{\top})^k \mid d_n^*}.
\end{align}
From 
$ Z^{\top}\ket{d_n^*}=\, \ket{d_n^*} - a_{n-1}\ket{d_{n-1}^*},$
one thus arrives at the contiguity relation 
\begin{align}
   U_{m}(n)\Big|_{\alpha\to\alpha+1}=U_{m}(n)- a_{n-1} U_{m}(n-1) \label{CR:U}
\end{align}
which gives \eqref{cont1} when rewritten in terms of $\cU_m$.
On the other hand, applying the operator $Z$ to the vector $\bra{e_m}\mid$ is seen to give another contiguity relation, namely, 
\begin{align}
 &U_{m}(n)\Big|_{\alpha\to\alpha+1} =- (\bra{e_m \mid Z})\ket{d_n^*}\nonumber\\
&\quad = -Z^{(e)}_{m+1,m}U_{m+1}(n;\alpha)-Z^{(e)}_{m,m}U_{m}(n;\alpha)-Z^{(e)}_{m-1,m}U_{m-1}(n;\alpha), \label{CR2:U} 
\end{align}
which amounts to \eqref{cont2} in terms of $\cU_m$.
\end{proof}

\section{Conclusion}
This paper has begun the study of an enlargement of the Askey scheme to biorthogonal rational functions with the objective of offering a full characterization of these functions.  The approach is rooted in the introduction of so-called meta algebras that stand to subsume the algebras of Askey-Wilson type \cite{koornwinder2023charting}, \cite{koornwinder2018dualities}, \cite{mazzocco2016confluences}, \cite{granovskii1992mutual}, \cite{vinet2016hypergeometric} associated to the various families of orthogonal polynomials of the Askey scheme \cite{koekoek2010hypergeometric}. These last algebras embody the bispectrality of the hypergeometric polynomials of the Askey scheme and have numerous interesting features ans applications \cite{crampe2021askey}. The meta algebras are expected to similarly account for the bispectrality of the biorthogonal rational functions attached to the entries of the Askey scheme. In fact these meta algebras are furthermore poised to provide a unified interpretation of both the orthogonal polynomials and the biorthogonal rational functions of a given hypergeometric type since the corresponding (degenerate) Askey-Wilson algebra embeds in the meta algebra of relevance. The general approach involves constructing modules of the meta algebra and then considering overlaps between different eigenbases of this representation space. The rational functions appear when Generalized Eigenvalue problems (GEVP) are set up while polynomials arise when using rather the eigenvectors of the linear pencil of the operators appearing in the GEVP.

 By focusing on the terminating $\hg{3}{2}$  case, that is the polynomials and rational functions of Hahn type, this paper has launched and validated this program. The meta Hahn algebra $m\mathfrak{H}$ was introduced and from its representation theory the properties of the Hahn polynomials were recovered and the characterization of the rational functions of Hahn type and of their biorthogonal partners was obtained. The simplicity of $m\mathfrak{H}$ should be stressed. It is minimally quadratic and has for subalgebra a PBW deformation \cite{gaddis2013pbw} of the Artin-Schelter regular algebra $[X,Z]=Z^2$ known as the Jordan plane \cite{iyudu2014representation}. (See also \cite{benkart2013parametric}.) 

Specifically, it was shown that starting from the easily obtained two-diagonal (finite-dimensional) representations of $m\mathfrak{H}$, the full theory of both the Hahn polynomials and the Hahn rational functions unfolds. As a result, the biorthogonality, the recurrence relations and the difference equations of the rational functions of Hahn type were obtained and, as a nice bonus,
a synthetic and complete derivation of the Hahn algebra representations and of the fundamental properties of the Hahn and dual Hahn polynomials was provided. 

The paper sets the directions to pursue the program of characterizing the biorthogonal rational functions of $q$-Hahn  \cite{bussiere2022bispectrality}, Racah and $q$-Racah types. This will call for the introduction of the corresponding meta algebras and would essentially complete the finite rational extension of the Askey scheme. We intend to pursue this next. Subsequently the infinite dimensional version of this study should be elaborated to complete the univariate picture.

Further generalizations obviously come to mind and underscore the potential that the notion of meta algebra holds: moving beyond the Askey scheme or exploring multivariate situations and their meta algebras are some examples. Examining more deeply meta algebras and their potential occurrences in diverse areas also warrants attention.

\section*{Acknowledgments}

The authors have repeatedly been hosted by the home institutions of their colleagues and express their joint gratitude to these organizations. 
The research of ST is supported by JSPS KAKENHI (Grant Number 19H01792). LV is funded in part through a discovery grant of the Natural Sciences and Engineering Research Council (NSERC) of Canada. The work of AZ is supported by the National Science Foundation of China (Grant No.11771015).

\section*{Appendix: Actions of operators on several bases}

Note that $a_{-1}=a_{N}=0$.
\begin{align}
    &Z \ket{d_n} = -\ket{d_n} + a_n \ket{d_{n+1}},
\label{action:Zond}\\
    &X \ket{d_n} = \lambda_n Z \ket{d_n} = -\lambda_n \ket{d_n} + \lambda_n a_n \ket{d_{n+1}}, \label{Xond}\\
    &V \ket{d_n} = \sum_{j=\max(0,n-1)}^N V^{(d)}_{j,n} \ket{d_{j}}, \label{Vond}\\
    &Z^{\top} \ket{d_n^*} = -\ket{d_n^*} + a_{n-1} \ket{d_{n-1}^*}, \label{ZTond*}\\
    &X^{\top} \ket{d_n^*} = \lambda_n Z^{\top} \ket{ d_n^*}=- \lambda_n \ket{ d_n^* }+  \lambda_n a_{n-1} \ket{d_{n-1}^*}, \\
    &V^{\top} \ket{d_n^*} = \sum_{j=0}^{\min(n+1,N)} V^{\top(d*)}_{j,n} \ket{d_{j}^*}, \label{VTonds}\\
 &V^{\top}Z^{\top} \ket{d_n^*} = (V^{\top}Z^{\top})_{n+1,n}^{(d*)} \ket{d_{n+1}^*} + (V^{\top}Z^{\top})_{n,n}^{(d*)} \ket{d_{n}^*} + (V^{\top}Z^{\top})_{n-1,n}^{(d*)} \ket{d_{n-1}^*},\\
& VZ\ket{d_n} 
=(VZ)_{n+1,n}^{(d)} \ket{d_{n+1}}
+(VZ)_{n,n}^{(d)} \ket{d_{n}}
+(VZ)_{n-1,n}^{(d)} \ket{d_{n-1}},
\label{action:VZond}\\    
    &Z \ket{e_n}= Z^{(e)}_{n+1,n} \ket{e_{n+1}}+Z^{(e)}_{n,n} \ket{e_{n}} + Z^{(e)}_{n-1,n}\ket{e_{n-1}},\\
    &X \ket{e_n} =X^{(e)}_{n+1,n}\ket{e_{n+1}} + X^{(e)}_{n,n}\ket{e_{n}} + X^{(e)}_{n-1,n}\ket{e_{n-1}},\\
    &V \ket{e_n} = \mu_n \ket{e_n}, \\
&Z\ket{f_n} =\sum_{j=n}^N(-1)^{j+n+1}a_{n}a_{n+1}\cdots a_{j-1}\ket{f_{j}}, \\
&X\ket{f_n} =(n-\alpha) \ket{f_n}+ \sum_{j=n+1}^N(-1)^{j+n}\mu\, a_{n}a_{n+1}\cdots a_{j-1}\ket{f_{j}}, \\
 &V^{\top}\ket{f_n^*} = V^{\top(f*)}_{n+1,n} \ket{f^*_{n+1}} +V^{\top(f*)}_{n,n}\ket{f^*_{n}}+V^{\top(f*)}_{n-1,n}\ket{f^*_{n-1}},\label{action:VTonfs} 
\end{align}
where
\begin{align}
& V_{n-1,n}^{(d)} =\frac{n(n-N-1)}{a_{n-1}},\\
&V_{n,n}^{(d)} =(n-\alpha)(\alpha+\eta_1+1-n)+(\alpha-\beta)(\alpha+\beta+\eta_1+1-N),\\
&V_{j,n}^{(d)} =(\alpha-\beta)(\alpha+\beta+\eta_1+1-N)\, a_na_{n+1}\dots a_{j-1}\qquad (j>n),\\
&(V^{\top}Z^{\top})_{n+1,n}^{(d*)}=\frac{(n + 1)(N-n)}{a_n}, \label{VZtop1}\\
& (V^{\top}Z^{\top})_{n,n}^{(d*)}=(\alpha - n)(N - 2n)+ (\beta - n) \eta_1 + (\beta - N)(\beta + 1), \label{VZtop2}\\
& (V^{\top}Z^{\top})_{n-1,n}^{(d*)}=a_{n - 1}(\alpha - n)(n - \alpha - \eta_1 - 1), \label{VZtop3}\\
&
(VZ)^{(d)}_{n+1,n} = (V^{\top}Z^{\top})_{n,n+1}^{(d*)},\quad
(VZ)^{(d)}_{n,n} = (V^{\top}Z^{\top})_{n,n}^{(d*)},\quad
(VZ)^{(d)}_{n-1,n} = (V^{\top}Z^{\top})_{n,n-1}^{(d*)}, \label{VZcoe}\\
&V^{\top(f*)}_{n+1,n}=V_{n,n+1}^{(f)},\quad V^{\top(f*)}_{n,n}=V_{n,n}^{(f)},\quad 
V^{\top(f*)}_{n-1,n}=V_{n,n-1}^{(f)} \quad \text{see \eqref{vf+1}, \eqref{vf}, \eqref{vf-1}}.\label{coefvf*}
\end{align}

\bibliographystyle{unsrt} 
\bibliography{ref_meta1.bib} 
\end{document}